\definecolor{SUPSI-Blu}{HTML}{002AA7}
\definecolor{SUPSI-Ciano}{HTML}{0090C8}
\definecolor{SUPSI-Verde}{HTML}{74C244}
\definecolor{SUPSI-Antracite}{HTML}{3B3C48}
\theoremstyle{plain}
\newtheorem{theorem}{Theorem}[section]
\newtheorem{lemma}[theorem]{Lemma}
\tikzset{
control/.style = {rectangle, rounded corners, dotted, draw=black, minimum height=2em, text centered},
fast/.style = {rectangle, rounded corners, dashed, draw=black, minimum height=2em, text centered},
slow/.style = {rectangle, rounded corners, draw=black, minimum height=2em, text centered},
}
\newcommand{\dt}{\frac{\mathrm{d}}{\mathrm{d}t}}
\newcommand{\dtbar}{\frac{\mathrm{d}}{\mathrm{d}\overline{t}}}
\newcommand{\VO}{V\!O_2}
\newcommand{\Ie}{I_\mathrm{e}}
\newcommand{\Gup}{G_\mathrm{up}}
\newcommand{\Gpr}{G_\mathrm{pr}}
\newcommand{\IL}{I\!L6}
\newcommand{\VL}{V\!L}
\newcommand{\SI}{S_\mathrm{I}}
\newcommand{\isdef}{\mathrel{\mathrel{\mathop:}=}}
\newcommand{\defis}{\mathrel{=\mathrel{\mathop:}}}
\title[Reduced model of physical activity effects on diabetes]
{A reduced model for the long-term effects
of physical activity on type 2 diabetes}
\author[Multerer]{Lea Multerer}
\address{Lea Multerer: SUPSI, Dalle Molle Institute for Artificial Intelligence (IDSIA), Lugano, Switzerland}
\email{lea.multerer@supsi.ch}
\author[De Paola]{Pierluigi Francesco De Paola}
\address{Pierluigi Francesco De Paola: CNR, CNR-IEIIT, Milan, Italy; CNR, CNR-IASI, Rome, Italy; Politecnico of Bari, Bari, Italy}
\email{p.depaola@phd.poliba.it}
\author[Lenatti]{Marta Lenatti}
\address{Marta Lenatti: CNR, CNR-IEIIT, Milan, Italy}
\email{martalenatti@cnr.it}
\author[Paglialonga]{Alessia Paglialonga}
\address{Alessia Paglialonga: CNR, CNR-IEIIT, Milan, Italy}
\email{alessia.paglialonga@cnr.it}
\author[Azzimonti]{Laura Azzimonti}
\address{Laura Azzimonti: SUPSI, Dalle Molle Institute for Artificial Intelligence (IDSIA), Lugano, Switzerland}
\email{laura.azzimonti@supsi.ch}
\begin{document}
%%%%%%%%%%%%%%%%%%%%%%%%%%%%%%%%%%%%%%%%%%%%%%%%%%%%%%%%%%%%%%%%%%%%%%%%%%%%%%%%
\begin{abstract}
Type 2 diabetes progresses slowly and may be
reversed through lifestyle changes, but
quantifying the long-term impact of regular physical activity
remains challenging due to sparse longitudinal data.
Mechanistic models offer a powerful tool
by simulating metabolic processes over extended
timescales. However, multi-scale formulations that capture
both the short-term effects of exercise sessions and the
slow evolution of disease tend to be computationally demanding,
limiting their practical use in personalized decision support.

To address this limitation, we derived
a reduced version of a two-scale model that captures the
short- and long-term effects of physical activity
on blood glucose regulation.
By analytically averaging the short-term effects
induced by exercise, we developed a homogenized formulation
that transmits the average contribution
of physical activity to the slower glucose-insulin dynamics.
This reduction preserves the key model dynamics while
decreasing computational complexity by almost a factor 2000.
We prove that the approximation error remains bounded
and confirm the model's accuracy through a
parameter-based simulation study.

The resulting model provides a mathematically grounded
reduction that retains key physiological mechanisms
while enabling fast long-term simulations.
This substantial computational gain makes it suitable for
integration into medical decision support systems,
where it can be used to design and evaluate personalized
physical activity plans aimed at reducing the risk of type 2 diabetes.
\end{abstract}

%%%%%%%%%%%%%%%%%%%%%%%%%%%%%%%%%%%%%%%%%%%%%%%%%%%%%%%%%%%%%%%%%%%%%%%%%%%%%%%%
\maketitle

\section*{Keywords}
Reduced model,
Ordinary differential equations,
Homogenization,
Physical activity,
Type 2 diabetes,
Long-term progression,
Computational efficiency.

%%%%%%%%%%%%%%%%%%%%%%%%%%%%%%%%%%%%%%%%%%%%%%%%%%%%%%%%%%%%%%%%%%%%%%%%%%%%%%%%
\section{Introduction}

The progression to type 2 diabetes is asymptomatic, develops over years, and may often be reversed
with adequate lifestyle changes~\cite{diabetes_prevention_program_research_group_reduction_2002,bull_world_2020}.
Personalized recommendations on modifiable risk factors,
such as physical activity, are crucial for preventing or delaying the disease progression~\cite{rooney_global_2023}.
However, longitudinal data capturing the regular impact of physical activity over many years
remain scarce, making data-driven assessment of long-term effects infeasible in practice.
Mechanistic models offer a way to bridge
this gap by simulating the complex physiological mechanisms over long timescales.

Ordinary differential equation models (ODEs)
have been widely used to model glucose-insulin interactions, capturing effects ranging from a minute-scale~\cite{bergman_physiologic_1981,roy_dynamic_2007} to long-term dynamics spanning years or even decades~\cite{topp_model_2000,ha_mathematical_2016,de_paola_long-term_2023,de_paola_modeling_2025}.
Early minimal models focused on short-term glucose and insulin interactions, using a few
state variables~\cite{bergman_physiologic_1981, bergman_minimal_2005}.
Extensions expanded these approaches to include the short-term effects of
physical activity~\cite{roy_dynamic_2007}.
Building on these foundations, a growing range of models captures complex
metabolic and inflammatory processes across multiple time-scales~\cite{palumbo_personalizing_2018,palumbo_computational_2023,multerer_computationally_2024}.
Models for long-term disease progression integrate beta-cell dynamics that account for the long-term
interactions between glucose and insulin~\cite{topp_model_2000}. Subsequent extensions
introduce additional state variables to better capture the progression over several years or even decades~\cite{de_gaetano_mathematical_2008, ha_mathematical_2016, de_gaetano_novel_2019}. However, these long-term progression models do not incorporate the effects of physical activity on the glucose-insulin dynamics.

Recently, this gap was filled by a model capturing the short- and long-term effects of regular physical activity on
blood glucose regulation in terms of twelve ODEs on two timescales~\cite{de_paola_long-term_2023, de_paola_modeling_2025}.
This model allows for a mechanistic description of
how long-term physiological processes are influenced by the integral effects of physical activity. The model is able to estimate the long-term benefits of physical activity in a variety of conditions and includes inter-individual variability, for example in terms of response to exercise, exercise program structure and adherence rates, and individual level of risk \cite{de_paola_modeling_2025}. As such, the model may be the basis for deriving personalized
physical activity recommendations to prevent the progression to type 2 diabetes.
However, due to its multi-scale nature, repeated simulations remain computationally intensive,
limiting its practical use in decision-support.
This calls for a model reduction that preserves the mechanistic
grounding while enabling fast, scalable simulation to support clinical decision-making.

A powerful approach to reducing the complexity of multi-scale differential equation models is homogenization~\cite{bakhvalov_homogenisation_1989,cioranescu_introduction_1999,sanders_averaging_2007,pavliotis_multiscale_2008,allaire_brief_2012}.
It applies to systems with dynamics across multiple scales, such as high-frequency oscillations over time or distinct spatial patterns.
The goal is to replace the short-scale oscillations with a smoothed or averaged solution to yield an effective, simplified model that captures the macroscopic dynamics.
Common applications include material science~\cite{torquato_random_2002} and fluid dynamics~\cite{hornung_homogenization_1997}.
Although traditionally used for partial differential equations,
homogenization techniques can also be effectively applied to ODEs~\cite{verhulst_methods_2005}.

In this work, we derive a reduced version
of a model that examines the long-term effects of physical activity on blood glucose regulation~\cite{de_paola_long-term_2023,de_paola_modeling_2025},
applying principles from homogenization theory.
This reduced model preserves the essential long-term influence of
physical activity on type 2 diabetes progression while being computationally efficient. We prove the boundedness of the approximation error
and inspect it in a numerical simulation study.
Our approach leads to a fast approximation of the full model,
enabling the efficient prediction of type 2 diabetes progression as a function of physical activity
and facilitating personalized physical activity plans for risk reduction.
To the best of our knowledge, this is the first application
of homogenization theory to mechanistic models of type 2 diabetes
progression,
linking model reduction with clinically relevant long-term simulations.

The remainder of this paper is structured as follows:
In Section~\ref{sec:full}, we present a scaled version of the full model to improve numerical stability
and establish the existence and uniqueness of solutions.
Section~\ref{sec:homogenization} details the reduction of the short-term state variables and provides a proof of error bounds.
In Section~\ref{sec:numerics}, we present a numerical comparison of the original and reduced
models across various parameters and initial conditions.
Finally, Section~\ref{sec:discussion} discusses the implications and contributions of this work.
\section{A scaled model for the effect of regular physical activity on type 2 diabetes progression}\label{sec:full}

The system of ODEs under investigation operates on two timescales.
The short-term equations describe how the glucose-insulin
regulation mechanism is influenced during a physical activity session on a minute-scale.
These dynamics exhibit a periodic pattern, repeating with
every physical activity session.
The long-term equations describe the evolution of glucose regulation over years,
parametrized at a day-scale. These dynamics are modulated by physical activity through a coupling with the short-term system.

\subsection{Model formulation}

We present a scaled, coupled version of the original model formulation for
improved numerical stability~\cite{langtangen_scaling_2016}.
The detailed technical steps to transform the original system into the scaled version managing the multiple scales
are provided in Appendix~\ref{appendix:scaling}.

We consider the system of ODEs
\begin{equation}\label{eq:y}
\dt \mathbf{y}(t,u)
\isdef \dt
\begin{bmatrix}
\mathbf{y_1}(t,u) \\
\mathbf{y_2}(t,\mathbf{y_1})
\end{bmatrix}
=
\begin{bmatrix}
\mathbf{f_1}(t, u, \mathbf{y_1}) \\
\mathbf{f_2}(t, \mathbf{y_1}, \mathbf{y_2})
\end{bmatrix}
\defis \mathbf{f}(t, u, \mathbf{y})
\end{equation}
for $0< t \leq t_\mathrm{end}$, together with the initial condition
$\mathbf{y}(0) = \mathbf{y_0}$.
The control $u(t)$ is a periodic continuation of a Heaviside function, defined for $n\in\mathbb{N}$ periods as
\begin{equation}\label{eq:control}
u(t) \isdef 
\begin{cases}
1 & \text{for } k\nu \leq t \leq k\nu + \delta,\\
0 & \text{for } k\nu + \delta < t < (k+1)\nu,
\end{cases}
\end{equation}
where 
$k = 0,\dots, n-1$ and $t_\mathrm{end} \isdef n\nu$.
The parameter $\nu$ denotes the period length of physical activity and
refers to the time between the start of two consecutive sessions,
and $\delta$ is the duration of a physical activity session.
The vector $\mathbf{y_1}$ contains the short-term equations and
comprises $5$ state variables
\[
\mathbf{y_1} \isdef [\VO, \Gpr, \Gup, \Ie, \IL]^T,
\]
that satisfy the following ODEs
\begin{equation}\label{eq:y1}
\dt \mathbf{y_1}
\isdef \dt
\begin{bmatrix}
\VO \\
\Gpr \\
\Gup \\
\Ie \\
\IL
\end{bmatrix}
=
\begin{bmatrix}
\lambda_t \theta(u(t) - \VO) \\
\lambda_t \alpha_2(\VO - \Gpr) \\
\lambda_t \alpha_4(\VO - \Gup) \\
\lambda_t \alpha_6(\VO - \Ie) \\
\lambda_t \kappa_{\mathrm{IL6}}(\VO - \IL)
\end{bmatrix}
\defis \mathbf{f_1}(t,u,\mathbf{y_1}),
\end{equation}
together with the initial conditions
$\mathbf{y_1}(0) = [0, 0, 0, 0, 0]^T$.
Details on the constant scaling parameter $\lambda_t$ and the other parameters $\theta$, $\alpha_2$, $\alpha_4$, $\alpha_6$ and $\kappa_{\mathrm{IL6}}$ are given in Appendix~\ref{appendix:parameters}.
The short-term variables $[\VO, \Gpr, \Gup, \Ie, \IL]^T$
capture the immediate physiological responses to physical activity.
Specifically, oxygen consumption ($\VO$) drives the
other variables during exercise.
The rate of increase in
hepatic glucose production ($\Gpr$) and glucose uptake by working tissues ($\Gup$)
reflect the acute demand for glucose during activity.
The rate of increase in insulin removal from circulation ($\Ie$) is enhanced by exercise, while the rate of release of Interleukin-6 ($\IL$), an anti-inflammatory protein~\cite{morettini_system_2017}, accounts for its anti-inflammatory effects.
These short-term equations are parametrized such that the numerical
solution of $\mathbf{y_1}$ increases as soon as physical activity starts and decays fast once it stops again.
This yields a periodic pattern of the solution $\mathbf{y_1}$ with period $\nu$, with all components remaining bounded within a fixed range for all $t \in [0, t_\mathrm{end}]$.

The vector $\mathbf{y_2}$ summarizes the long-term equations and comprises $7$ state variables,
\[
\mathbf{y_2} \isdef [\VL, \SI, \Sigma, \Gamma, B, I, G]^T,
\]
that satisfy the following ODEs
\begin{equation}\label{eq:y2}
\dt \mathbf{y_2}
\isdef \dt
\begin{bmatrix}
\VL \\
\SI \\
\Gamma \\
\Sigma \\
B \\
I \\
G
\end{bmatrix}
=
\begin{bmatrix}
h_{\VL}(\VL,\IL)\\
h_{\SI}(\SI,\VL)\\
h_{\Gamma}(\Gamma,G)\\
h_{\Sigma}(\Sigma,\Gamma,G)\\
h_{B}(B,\VL,\Gamma,\Sigma,G)\\
h_{I}(I,\Ie,\Gamma,\Sigma,B,G)\\
h_{G}(G,\Gup,\Gpr,\SI,I)
\end{bmatrix}
\defis \mathbf{f_2}(t,\mathbf{y_1},\mathbf{y_2}),
\end{equation}
together with the scaled initial conditions
\[
\mathbf{y_2}(0) = [0, 1, \Gamma_{0\lambda}, \Sigma_{0\lambda}, 1, 1, 1]^T.
\]
The right hand side functions are defined as follows:
\begin{align*}
h_{\VL}(\VL,\IL) &\isdef \lambda_t\kappa_{\mathrm{s}}(\IL - \VL),\\
h_{\SI}(\SI,\VL) &\isdef d_{\lambda}(\VL)\frac{\theta_{\SI} - \lambda_{\SI}\SI}{\tau_{\SI}},\\
h_{\Gamma}(\Gamma,G) &\isdef \frac{g_{\lambda}(G) - \Gamma}{\tau_{\Gamma}},\\
h_{\Sigma}(\Sigma,\Gamma,G) &\isdef \frac{s_{\lambda}(\Gamma, \Sigma, G) - \Sigma}{\tau_{\Sigma}},\\
h_{B}(B,\VL,\Gamma,\Sigma,G) &\isdef 
\frac{p_\lambda(\VL, \Gamma, \Sigma, G) - a_\lambda(\VL, G)}{\tau_{B}}
B,\\
h_{I}(I,\Ie,\Gamma,\Sigma,B,G) &\isdef r_{\lambda}(\Gamma, \Sigma, G)B  - \kappa I - \lambda_{\Ie I}\Ie,\\
h_{G}(G,\Gup,\Gpr,\SI,I) &\isdef \rho_{\lambda} + \lambda_{tG}\omega(\lambda_{\Gpr}\Gpr - \lambda_{\Gup}\Gup) - (\eta_0 + \lambda_{\SI I} \SI I)G.
\end{align*}
These functions include six auxiliary functions $d_{\lambda}$,
$g_{\lambda}$, $s_{\lambda}$, $p_\lambda$, $a_\lambda$ and $r_{\lambda}$ that are written out in detail in Equation~\ref{eq:final_aux_fun} in Appendix~\ref{appendix:scaling}.
These auxiliary functions are compositions of Hill-type functions and fractions incorporating exponential functions.
Again, a list of all the parameters introduced are listed in Appendix~\ref{appendix:parameters}.

The long-term equations capture the behavior of glucose regulation over a time span of years.
Due to the feedback structure of the system and the boundedness of the short-term variables $\mathbf{y_1}(t)$, these state variables remain bounded over time for the selected set of parameters.
The core of this subset of equations lies in the variables $[\VL, B, I, G]^T$, modeling the glucose-insulin ($G$ and $I$) negative feedback loop.
This feedback mechanism involves the action of beta cells ($B$), responsible for insulin release.
The variable $\VL$ accounts for the long-term effects of physical activity on beta cells and insulin sensitivity ($\SI$) promoted by Interleukin-6 ($\IL$). This variable bridges
the two timescales in the model.
Furthermore, the state variables $\Gamma$ and
$\Sigma$ model mechanisms that link the
effects of $B$ on the negative feedback loop between $G$ and $I$~\cite{ha_mathematical_2016}.

A visualization of all the state variables, the control and their interplay is given in Figure~\ref{fig:DePaola_Structure}.
\begin{figure}
\centering
\begin{tikzpicture}
%position all nodes
\node[slow, text width=2em] (sigma) at (270:2) {$\Sigma$};
\node[slow, text width=2em] (gamma) at (198:2) {$\Gamma$};
\node[slow, text width=2em] (I) at (126:2) {$I$};
\node[slow, text width=2em] (G) at (54:2) {$G$};
\node[slow, text width=2em] (B) at (342:2) {$B$};
\node[slow, text width=2em, right of = G, xshift = 3em] (SI) {$\SI$};
\node[slow, text width=2em, above of = SI, yshift = 0.5em] (VL) {$\VL$};
\node[fast, text width=2em, above of = VL, yshift = 0.5em] (IL6) {$\IL$};
\node[fast, text width=2em, left of = IL6, xshift = -2em] (Gpr) {$\Gpr$};
\node[fast, text width=2em, left of = Gpr, xshift = -2em] (Gup) {$\Gup$};
\node[fast, text width=2em, left of = Gup, xshift = -2em] (Ie) {$\Ie$};
\node[fast, text width=2em, above=2.75em of $(Gup.west)!0.5!(Gpr.east)$] (VO) {$\VO$};
\node[control, text width=2em, above of = VO, yshift = 0.5em] (u) {$u$};
%draw arrows
\draw[->,>=Stealth] (u.south) -- (VO.north);
\draw[->,>=Stealth] (VO.south) -- (Ie.north);
\draw[->,>=Stealth] (VO.south) -- (Gup.north);
\draw[->,>=Stealth] (VO.south) -- (Gpr.north);
\draw[->,>=Stealth] (VO.south) -- (IL6.north);
\draw[->,>=Stealth] (IL6.south) -- (VL.north);
\draw[->,>=Stealth] (VL.south) -- (SI.north);
\draw[->,>=Stealth] (SI.west) -- (G.east);
\draw[->,>=Stealth] (Ie.south) -- (I.north);
\draw[->,>=Stealth] (Gup.south) -- (G.95);
\draw[->,>=Stealth] (Gpr.south) -- (G.85);
\draw[->,>=Stealth] (VL.west) -- (B.85);
\draw[<->,>=Stealth] (I.east) -- (G.west);
\draw[->,>=Stealth] (gamma.north) -- (I.250);
\draw[->,>=Stealth] (gamma.east) -- (B.west);
\draw[->,>=Stealth] (gamma.south) -- (sigma.west);
\draw[->,>=Stealth] (sigma.95) -- (I.270);
\draw[->,>=Stealth] (sigma.east) -- (B.south);
\draw[->,>=Stealth] (B.175) -- (I.290);
\draw[->,>=Stealth] (G.south) -- (gamma.5);
\draw[->,>=Stealth] (G.south) -- (sigma.85);
\draw[->,>=Stealth] (G.south) -- (B.95);
\end{tikzpicture}
\caption{Structure of the full model $\mathbf{y}(t,u)$.
The control $u$ (dotted) influences the short-term state variables grouped together in $\mathbf{y_1}(t,u)$ (dashed) which, in turn, influence
the long-term state variables grouped together in $\mathbf{y_2}(t,\mathbf{y_1})$ (solid).}\label{fig:DePaola_Structure}
\end{figure}
As can be seen, the coupling between the short- and long-term variables is one-way: the long-term dynamics depend on the short-term variables, but not vice versa.
For further insights into the biological interpretation of this model, we make reference to the original publications~\cite{de_paola_long-term_2023,de_paola_modeling_2025}
and the preceding works~\cite{roy_dynamic_2007,ha_mathematical_2016,morettini_system_2017}.

\subsection{Existence and uniqueness}\label{subsec:PicardLindeloef}

To prove existence and uniqueness of the solution we use standard arguments from the theory of ODEs.
The Picard-Lindelöf theorem, which requires the right-hand side $\mathbf{f}$ in Equation~\eqref{eq:y}
to be continuous in $t$ and Lipschitz-continuous in the state variables, 
guarantees local existence and uniqueness of a solution given the initial condition $\mathbf{y_0}$.

The Lipschitz-continuity with respect to the state variables can easily be verified.
In System~\eqref{eq:y1}, the right hand sides of the short-term equations are linear in each state variable, ensuring Lipschitz-continuity.
For System~\eqref{eq:y2}, the Lipschitz-continuity of the auxiliary functions has been established in Appendix~\ref{appendix:scaling}, which directly implies the Lipschitz-continuity of $\mathbf{f_2}(t,\mathbf{y_1},\mathbf{y_2})$.

The control $u(t)$, however, is discontinuous at time points
where the Heaviside function~\eqref{eq:control} changes values, whenever 
$t=k\nu + \delta$ or
$t=(k+1)\nu$ for $k = 0, \dots, n-1$.
Despite this, in the initial interval $t \in [0,\delta]$ all the conditions for the Picard-Lindelöf theorem are satisfied and there exists a unique solution given the initial condition $\mathbf{y_0}$.
For subsequent intervals, such as $t\in (\delta,\nu)$, we restart the system using the final value for $\mathbf{y}$ from the previous interval as initial condition,
again yielding existence and uniqueness in the interval.
By iteratively restarting the system, we extend existence and uniqueness to the entire interval $t\in[0,t_\mathrm{end}]$ for a given initial condition $\mathbf{y_0}$.
\section{Model reduction for computational speedup}\label{sec:homogenization}

When numerically solving the model introduced in Section~\ref{sec:full}, the
short-term equations require the numerical solver to take small time steps, in the order of minutes, to capture the periodic physical activity dynamics.
When simulating several years of diabetes progression, this is computationally intensive, especially if many simulations are required.
Instead of passing on the short-term effects encapsulated in $\mathbf{y_1}$ to the long-term
effects described in $\mathbf{y_2}$, we aim to replace
these fast fluctuations with constant values that represent the average contribution of the
short-term effects to $\mathbf{y_2}$.
This reduction follows the idea of periodic averaging~\cite{sanders_averaging_2007}. In
this case, the averaging procedure can be carried out analytically due to the structure of the control function $u(t)$ and the form of the short-term dynamics.

\subsection{Average solution of the short-term equations}

In the interval $[0,\delta]$, where $u(t) = 1$ and $\mathbf{y_1}(0) = \mathbf{0}$, the analytical solution for System~\eqref{eq:y1} is given by:
\[
\mathbf{y_1}(t)
= 
\begin{bmatrix}
\VO(t) \\
\Gpr(t) \\
\Gup(t) \\
\Ie(t) \\
\IL(t)
\end{bmatrix}
=
\begin{bmatrix}
1 - \exp(-\lambda_t \theta t) \\
1 + c_1(\alpha_2)\exp(-\lambda_t \theta t) - c_2(\alpha_2) \exp(-\lambda_t \alpha_2 t) \\
1 + c_1(\alpha_4)\exp(-\lambda_t \theta t) - c_2(\alpha_4) \exp(-\lambda_t \alpha_4 t) \\
1 + c_1(\alpha_6)\exp(-\lambda_t \theta t) - c_2(\alpha_6) \exp(-\lambda_t \alpha_6 t) \\
1 + c_1(\kappa_{\mathrm{IL6}})\exp(-\lambda_t \theta t) - c_2(\kappa_{\mathrm{IL6}}) \exp(-\lambda_t \kappa_{\mathrm{IL6}} t)
\end{bmatrix},
\]
where we define the two functions
\[
c_1(\pi) \isdef \frac{\pi}{\theta - \pi} \quad \text{and}\quad
c_2(\pi) \isdef \frac{\theta}{\theta - \pi}
\]
in order to write the parameters more compactly.

In the interval $(\delta,\nu)$, where $u(t) = 0$ and the initial condition is given by 
evaluating the previous solution at $t = \delta$, the analytical solution for the System~\eqref{eq:y1} is given by:
\[
\mathbf{y_1}(t)
= 
\begin{bmatrix}
\VO(t) \\
\Gpr(t) \\
\Gup(t) \\
\Ie(t) \\
\IL(t)
\end{bmatrix}
=
\begin{bmatrix}
\big(\exp(\lambda_t \theta \delta) -1\big)\exp(-\lambda_t \theta t) \\
c_4(\alpha_2) \exp(-\lambda_t \alpha_2 t) -c_3(\alpha_2)\exp(-\lambda_t \theta t)\\
c_4(\alpha_4) \exp(-\lambda_t \alpha_4 t)-c_3(\alpha_4)\exp(-\lambda_t \theta t) \\
c_4(\alpha_6) \exp(-\lambda_t \alpha_6 t)-c_3(\alpha_6)\exp(-\lambda_t \theta t) \\
c_4(\kappa_{\mathrm{IL6}}) \exp(-\lambda_t \kappa_{\mathrm{IL6}} t)-c_3(\kappa_{\mathrm{IL6}})\exp(-\lambda_t \theta t)
\end{bmatrix},
\]
where we
introduce
\[
c_3(\pi) \isdef \big(\exp(\lambda_t \theta \delta) -1\big)c_1(\pi), \quad \text{and}\quad
c_4(\pi) \isdef \big(\exp(\lambda_t \pi \delta )-1\big)c_2(\pi),
\]
again to write the parameters in compact form.
The steps leading up to these solutions are given in Appendix~\ref{appendix:reduction}.

In the next interval, $[\nu,\nu+\delta]$, we proceed analogously, noting that $\mathbf{y_1}(\nu)$ is close to zero. Note that the four parameters $\alpha_2$, $\alpha_4$, $\alpha_6$ and $\kappa_{\mathrm{IL6}}$ are much smaller than $\theta$ (see Appendix~\ref{appendix:parameters}), such that $0\leq \mathbf{y_1}(t)\leq 1$ holds for $t\in[0,T]$.
This procedure could be continued in order to construct a
semi-analytical solution of the short-term equations.
However, the solver would still need to take small
time steps to resolve the dynamics when using the semi-analytical solution.

Hence, we calculate the average $\mathbf{\mu}$ of $\mathbf{y_1}(t)$ in the interval $[0,\nu)$, defined as
\[
\mathbf{\mu}
\isdef
\frac{1}{\nu} \int_0^\nu 
\mathbf{y_1}(t)
\,\mathrm{d}t
=
\frac{1}{\nu}\bigg(\int_0^\delta \mathbf{y_1}(t) \,\mathrm{d}t + \int_\delta^\nu \mathbf{y_1}(t) \,\mathrm{d}t\bigg).
\]
A calculation, with details in Appendix~\ref{appendix:reduction}, yields
\[
\mathbf{\mu} = 
\begin{bmatrix}
\mu_{\VO} \\
\mu_{\Gpr} \\
\mu_{\Gup} \\
\mu_{\Ie} \\
\mu_{\IL}
\end{bmatrix}
=
\frac{1}{\nu}
\begin{bmatrix}
\delta
-\varepsilon \\
\delta + c_5(\alpha_2) - c_6(\alpha_2) \\
\delta + c_5(\alpha_4) - c_6(\alpha_4) \\
\delta + c_5(\alpha_6) - c_6(\alpha_6) \\
\delta + c_5(\kappa_{\mathrm{IL6}}) - c_6(\kappa_{\mathrm{IL6}})
\end{bmatrix},
\]
where we introduce
\[
\varepsilon \isdef \frac{\exp(-\lambda_t \theta\nu)}{\lambda_t \theta}\big(\exp(\lambda_t \theta \delta) -1\big),
\]
and
\[
c_5(\pi) \isdef \frac{\exp(-\lambda_t \theta\nu)}{\lambda_t \theta}c_3(\pi),
\quad
c_6(\pi) \isdef \frac{\exp(-\lambda_t\pi\nu)}{\lambda_t\pi}c_4(\pi),
\]
building on the previous results.

\subsection{Reduced model}

We use $\mathbf{\mu}$ to approximate the oscillating behavior of the short-term effects described by $\mathbf{y_1}$ and introduce an approximation $\widehat{\mathbf{y}}$ to System~\eqref{eq:y}:
\begin{equation}\label{eq:yhat}
\dt \widehat{\mathbf{y}}(t)
= \dt
\begin{bmatrix}
\widehat{\mathbf{y_1}}(t) \\
\widehat{\mathbf{y_2}}(t,\widehat{\mathbf{y_1}})
\end{bmatrix}
=
\begin{bmatrix}
\widehat{\mathbf{f_1}}(t, \widehat{\mathbf{y_1}}) \\
\mathbf{f_2}(t, \widehat{\mathbf{y_1}}, \widehat{\mathbf{y_2}})
\end{bmatrix}
= \widehat{\mathbf{f}}(t, \widehat{\mathbf{y}}),
\end{equation}
for $0< t \leq t_\mathrm{end}$, together with an initial condition defined below. 
For $\widehat{\mathbf{y_1}}$, we define a mock system
\[
\dt \widehat{\mathbf{y_1}}(t)
= \dt
\begin{bmatrix}
\widehat{\VO} \\
\widehat{\Gpr} \\
\widehat{\Gup} \\
\widehat{\Ie} \\
\widehat{\IL}
\end{bmatrix}
=
\begin{bmatrix}
0 \\
0 \\
0 \\
0 \\
0 \\
\end{bmatrix}
= \widehat{\mathbf{f_1}}(t,\widehat{\mathbf{y_1}}),
\]
with the initial conditions
$\widehat{\mathbf{y_1}}(0) \isdef \mathbf{\mu}$.
The right hand side for $\widehat{\mathbf{y_2}}$ remains as written out in System~\eqref{eq:y2}, but now gets the inputs from $\widehat{\mathbf{y_1}}$:
\begin{equation}\label{eq:y2hat}
\dt \widehat{\mathbf{y_2}}
= \dt
\begin{bmatrix}
\widehat{\VL} \\
\widehat{\SI} \\
\widehat{\Gamma} \\
\widehat{\Sigma} \\
\widehat{B} \\
\widehat{I} \\
\widehat{G}
\end{bmatrix}
=
\begin{bmatrix}
h_{\VL}(\widehat{\VL},\widehat{\IL})\\
h_{\SI}(\widehat{\SI},\widehat{\VL})\\
h_{\Gamma}(\widehat{\Gamma},\widehat{G})\\
h_{\Sigma}(\widehat{\Sigma},\widehat{\Gamma},\widehat{G})\\
h_{B}(\widehat{B},\widehat{\VL},\widehat{\Gamma},\widehat{\Sigma},\widehat{G})\\
h_{I}(\widehat{I},\widehat{\Ie},\widehat{\Gamma},\widehat{\Sigma},\widehat{B},\widehat{G})\\
h_{G}(\widehat{G},\widehat{\Gup},\widehat{\Gpr},\widehat{\SI},\widehat{I})
\end{bmatrix}
= \mathbf{f_2}(t,\widehat{\mathbf{y_1}},\widehat{\mathbf{y_2}}),
\end{equation}
with the previously given initial conditions.
Note that $\widehat{\mathbf{y}}(t)$ no longer directly depends on $u(t)$, because the control only influences the system via its parameters.
The existence and uniqueness of solutions to System~\eqref{eq:yhat} directly follows from the existence and uniqueness shown in Subsection~\ref{subsec:PicardLindeloef}.
From a numerical perspective, it is not necessary to solve System~\eqref{eq:yhat} explicitly. 
Instead, we can directly solve System~\eqref{eq:y2hat}, using the fact that
$\widehat{\mathbf{y_1}}(t) = \mathbf{\mu}$ for $0\leq t \leq t_\mathrm{end}$.

In classical homogenization and averaging theory, the separation of scales is
typically introduced through a small parameter $\varepsilon$ such that fast
dynamics evolve on the scaled time $t/\varepsilon$.
In our formulation, the constant parameter $\lambda_t$ plays a similar
role to $1/\varepsilon$. To stay consistent with the physiological
interpretation of the model, we keep the notation $\lambda_t$ and do
not study the asymptotic limit $\lambda_t \to \infty$.
Instead, we focus on a practical reduction of the system for
fixed values of $\lambda_t$, relevant to the long-term effects
of regular physical activity.

\subsection{Approximation error}\label{subsec:err}

We show that the error introduced by the homogenization procedure is bounded over a finite time horizon, for a fixed separation of timescales determined by the parameter $\lambda_t$.
Rather than analyzing the asymptotic limit, we quantify the approximation 
at physiologically relevant parameter values. To this end, we define the vector-valued $L^\infty$ norm as follows:
\[
\|\mathbf{g}\|_{L^\infty([0,t_\mathrm{end}])} = \max_{i=1,\dots,m} \bigg\{\mathop{\operatorname{ess\,sup}}_{t \in [0,t_\mathrm{end}]} |g_i(t)| \bigg\}
\]
for a function $\mathbf{g}\in L^\infty([0,t_\mathrm{end}],\mathbb{R}^m)$.
This error reflects the largest deviation
across all state variables over the time horizon.
\begin{theorem}\label{thm:1}
Let $\mathbf{y}(t,u)$ as introduced in System~\eqref{eq:y} and $\widehat{\mathbf{y}}(t)$ as introduced in System~\eqref{eq:yhat}.
Then there exists a constant $C$ depending on $t_\mathrm{end}$ such that
\[
\|\mathbf{y}(t,u) - \widehat{\mathbf{y}}(t)\|_{L^\infty([0,t_\mathrm{end}])} \leq C(t_\mathrm{end}).
\]
\end{theorem}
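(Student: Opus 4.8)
The plan is to split the homogenization error into the contribution of the fast block $\mathbf{y_1}$ and that of the slow block $\mathbf{y_2}$, to control the first by a time‑independent constant using the a priori bounds on $\mathbf{y_1}$, and to control the second through a Grönwall estimate whose constant is allowed to grow with $t_\mathrm{end}$. By the definition of the vector‑valued $L^\infty$ norm, the maximum over the twelve components of $\mathbf{y}(t,u)-\widehat{\mathbf{y}}(t)$ separates into the first five and the last seven:
\[
\|\mathbf{y}(t,u)-\widehat{\mathbf{y}}(t)\|_{L^\infty([0,t_\mathrm{end}])}
=\max\Bigl\{\|\mathbf{y_1}(t,u)-\widehat{\mathbf{y_1}}(t)\|_{L^\infty([0,t_\mathrm{end}])},\ \|\mathbf{y_2}(t)-\widehat{\mathbf{y_2}}(t)\|_{L^\infty([0,t_\mathrm{end}])}\Bigr\},
\]
so it suffices to bound each block separately.

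For the fast block I would use that $\widehat{\mathbf{y_1}}(t)\equiv\mathbf{\mu}$ and that, as already observed right after the period‑wise solution formulas, $0\le\mathbf{y_1}(t)\le 1$ componentwise on $[0,t_\mathrm{end}]$; since $\mathbf{\mu}$ is the average of $\mathbf{y_1}$ over one period, it too lies componentwise in $[0,1]$. Hence $\|\mathbf{y_1}(t,u)-\mathbf{\mu}\|_{L^\infty([0,t_\mathrm{end}])}\le 1\defis C_1$, a constant that does not even depend on $t_\mathrm{end}$. A sharper value could be read off the explicit formulas, but boundedness is all that is needed, and the point is precisely that this term does \emph{not} vanish.

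For the slow block I would set $\mathbf{e}(t)\isdef\mathbf{y_2}(t)-\widehat{\mathbf{y_2}}(t)$, so that $\mathbf{e}(0)=\mathbf{0}$ because $\mathbf{y_2}$ and $\widehat{\mathbf{y_2}}$ carry the same initial condition. Both trajectories exist and are continuous, hence bounded, on $[0,t_\mathrm{end}]$ by Subsection~\ref{subsec:PicardLindeloef}, so on a fixed compact set containing both of them $\mathbf{f_2}$ is Lipschitz in its $\mathbf{y_1}$‑argument with a constant $L_1$ and in its $\mathbf{y_2}$‑argument with a constant $L_2$ (the dependence of $\mathbf{f_2}$ on $\mathbf{y_1}$ is in fact affine, and the Lipschitz continuity in $\mathbf{y_2}$ follows from that of the auxiliary functions established in Appendix~\ref{appendix:scaling}). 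Adding and subtracting $\mathbf{f_2}(t,\mathbf{\mu},\mathbf{y_2})$,
\[
\dt\mathbf{e}(t)=\bigl[\mathbf{f_2}(t,\mathbf{y_1},\mathbf{y_2})-\mathbf{f_2}(t,\mathbf{\mu},\mathbf{y_2})\bigr]+\bigl[\mathbf{f_2}(t,\mathbf{\mu},\mathbf{y_2})-\mathbf{f_2}(t,\mathbf{\mu},\widehat{\mathbf{y_2}})\bigr],
\]
so that $\lvert\dt\mathbf{e}(t)\rvert\le L_1\,\lvert\mathbf{y_1}(t,u)-\mathbf{\mu}\rvert+L_2\,\lvert\mathbf{e}(t)\rvert\le L_1 C_1+L_2\,\lvert\mathbf{e}(t)\rvert$. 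Integrating from $0$ and applying Grönwall's inequality yields, for every $t\in[0,t_\mathrm{end}]$,
\[
\lvert\mathbf{e}(t)\rvert\le\frac{L_1 C_1}{L_2}\bigl(e^{L_2 t}-1\bigr)\le\frac{L_1 C_1}{L_2}\bigl(e^{L_2 t_\mathrm{end}}-1\bigr)\defis C_2(t_\mathrm{end}),
\]
and hence the same bound on $\|\mathbf{y_2}(t)-\widehat{\mathbf{y_2}}(t)\|_{L^\infty([0,t_\mathrm{end}])}$. The theorem then follows with $C(t_\mathrm{end})\isdef\max\{C_1,C_2(t_\mathrm{end})\}$.

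The step I expect to be the real obstacle is not the Grönwall inequality but securing a \emph{uniform} Lipschitz constant for $\mathbf{f_2}$ in $\mathbf{y_2}$: one must verify that $\mathbf{y_2}$ and $\widehat{\mathbf{y_2}}$ stay inside a single fixed compact region on $[0,t_\mathrm{end}]$ — this is where the boundedness of the Hill‑type auxiliary functions together with the existence argument of Subsection~\ref{subsec:PicardLindeloef} enter — so that one pair $L_1,L_2$ is valid throughout. I would also emphasize that this argument delivers only boundedness: the constant $C_2(t_\mathrm{end})$ grows exponentially in $t_\mathrm{end}$ and does not shrink with the activity period $\nu$, which is weaker than a classical averaging error estimate but is exactly what the statement asserts.
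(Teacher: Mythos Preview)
Your proposal is correct and follows essentially the same route as the paper: split into the $\mathbf{y_1}$ and $\mathbf{y_2}$ blocks, bound the first by a time-independent constant using $0\le\mathbf{y_1}\le 1$ and $\widehat{\mathbf{y_1}}\equiv\mathbf{\mu}$, and handle the second by Lipschitz continuity of $\mathbf{f_2}$ in both arguments together with Grönwall's inequality, arriving at the identical bound $\tfrac{L_1 C_1}{L_2}(e^{L_2 t_\mathrm{end}}-1)$. Your explicit remark that a uniform Lipschitz constant $L_2$ requires $\mathbf{y_2}$ and $\widehat{\mathbf{y_2}}$ to remain in a fixed compact set is, if anything, a point the paper's proof passes over rather quickly.
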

\begin{proof}
We can split the error as follows:
\begin{align*}
\|\mathbf{y}(t,u) - \widehat{\mathbf{y}}(t)\|_{L^\infty([0,t_\mathrm{end}])} = \max\big\{
&\|\mathbf{y_1}(t,u) - \widehat{\mathbf{y_1}}(t)\|_{L^\infty([0,t_\mathrm{end}])},\\
&\|\mathbf{y_2}(t,\mathbf{y_1}) - \widehat{\mathbf{y_2}}(t,\widehat{\mathbf{y_1}})\|_{L^\infty([0,t_\mathrm{end}])} \big\}.
\end{align*}
Since $\widehat{\mathbf{y_1}}(t) = \mathbf{\mu}$ by construction, we can bound the first part directly:
\[
\|\mathbf{y_1}(t,u) - \widehat{\mathbf{y_1}}(t)\|_{L^\infty([0,t_\mathrm{end}])} =
\|\mathbf{y_1}(t,u) - \mathbf{\mu}\|_{L^\infty([0,t_\mathrm{end}])} =
\|1 - \mathbf{\mu}\|_{L^\infty},
\]
where we used that $\mathbf{y_1}(t,u) \in [0,1]$ due to the scaling and that every element of $\mathbf{\mu}$ is smaller than $0.5$.
In the $L^\infty$-norm, this error is independent of $t_\mathrm{end}$ by construction, but larger than it would be in $L^2$-norm for example.

For the second part, we first note that
\begin{alignat*}{2}
\|\mathbf{f_2}(t,\mathbf{y_1},\mathbf{y_2}) - \mathbf{f_2}(t,\widehat{\mathbf{y_1}},\widehat{\mathbf{y_2}})\|_{L^\infty([0,t_\mathrm{end}])}
&\leq &&\mathbf{L_1}\|\mathbf{y_1} - \widehat{\mathbf{y_1}}\|_{L^\infty([0,t_\mathrm{end}])}\\
& && + \mathbf{L_2}\|\mathbf{y_2} - \widehat{\mathbf{y_2}}\|_{L^\infty([0,t_\mathrm{end}])}\\
&\leq &&\mathbf{L_1}\|1 - \mathbf{\mu}\|_{L^\infty}\\
& &&+ \mathbf{L_2}\|\mathbf{y_2} - \widehat{\mathbf{y_2}}\|_{L^\infty([0,t_\mathrm{end}])}.
\end{alignat*}
with Lipschitz constants $\mathbf{L_1}$ and $\mathbf{L_2}$. This holds by definition of the Lipschitz-continuity of $\mathbf{f_2}$.
We now invoke the Grönwall-Lemma in its differential form to bound
$\|\mathbf{y_2}(t,\mathbf{y_1}) - \widehat{\mathbf{y_2}}(t,\widehat{\mathbf{y_1}})\|_{L^\infty([0,t_\mathrm{end}])}$
and get that
\[
\|\mathbf{y_2}(t,\mathbf{y_1}) - \widehat{\mathbf{y_2}}(t,\widehat{\mathbf{y_1}})\|_{L^\infty([0,t_\mathrm{end}])}
\leq \mathbf{L_1}\|1 - \mathbf{\mu}\|_{L^\infty}
\int_{0}^t \exp\big(\mathbf{L_2}(t-s)\big)  \, \mathrm{d}s,
\]
for $t \in [0,t_\mathrm{end}]$. A calculation yields
\begin{align*}
\|\mathbf{y_2}(t,\mathbf{y_1}) - \widehat{\mathbf{y_2}}(t,\widehat{\mathbf{y_1}})\|_{L^\infty([0,t_\mathrm{end}])} &\leq \mathbf{L_1}\|1 - \mathbf{\mu}\|_{L^\infty}\exp(\mathbf{L_2} t)
\int_{0}^t\exp(-\mathbf{L_2}s)  \, \mathrm{d}s\\
&= \frac{\mathbf{L_1}}{\mathbf{L_2}}\|1 - \mathbf{\mu}\|_{L^\infty}
\big(\exp(\mathbf{L_2} t)-1\big)
\end{align*}
for $t \in [0,t_\mathrm{end}]$.
It follows that
\[
\|\mathbf{y}(t,u) - \widehat{\mathbf{y}}(t)\|_{L^\infty([0,t_\mathrm{end}])} \leq C(t_\mathrm{end}).
\]
\end{proof}
A related proof can be found in Sanders~et~al.~\cite[Section 2.8]{sanders_averaging_2007}, which presents a classical averaging result in the asymptotic regime $\varepsilon \to 0$. In contrast, our setting assumes a fixed separation of timescales via the parameter $\lambda_t \sim 1/\varepsilon$. Rather than analyzing the asymptotic limit, we aim to establish a practical error bound for physiologically relevant values of $\lambda_t$ over a finite time horizon. The bound in Theorem~\ref{thm:1} therefore depends on $t_\mathrm{end}$ and reflects the cumulative effect of replacing the periodic short-term dynamics with their average.
\section{Numerical results}\label{sec:numerics}

The reduced model provides a significant computational speedup compared to
the full model, as it does not require resolving minute-scale dynamics.
This allows the numerical solver to take time steps on the order of days rather than minutes. Furthermore, the numerical results underline that the two systems behave very similarly, as presented in the following.

\subsection{Illustration}

System~\eqref{eq:y} with the standard parameters listed in Appendix~\ref{appendix:parameters} is stiff. To solve it, 
we use an implicit solver and constrain the step size to be small enough
to catch the control dynamics.
In our experiments, a maximum step size of 1 hour with a fifth-order
implicit Runge-Kutta method was used.
An illustration of the numerical solutions for one month of the full and the reduced model can be found in Figure~\ref{fig:Sol_1month}.
\begin{figure}[htp]
    \centering
\includegraphics[width=0.8\textwidth]{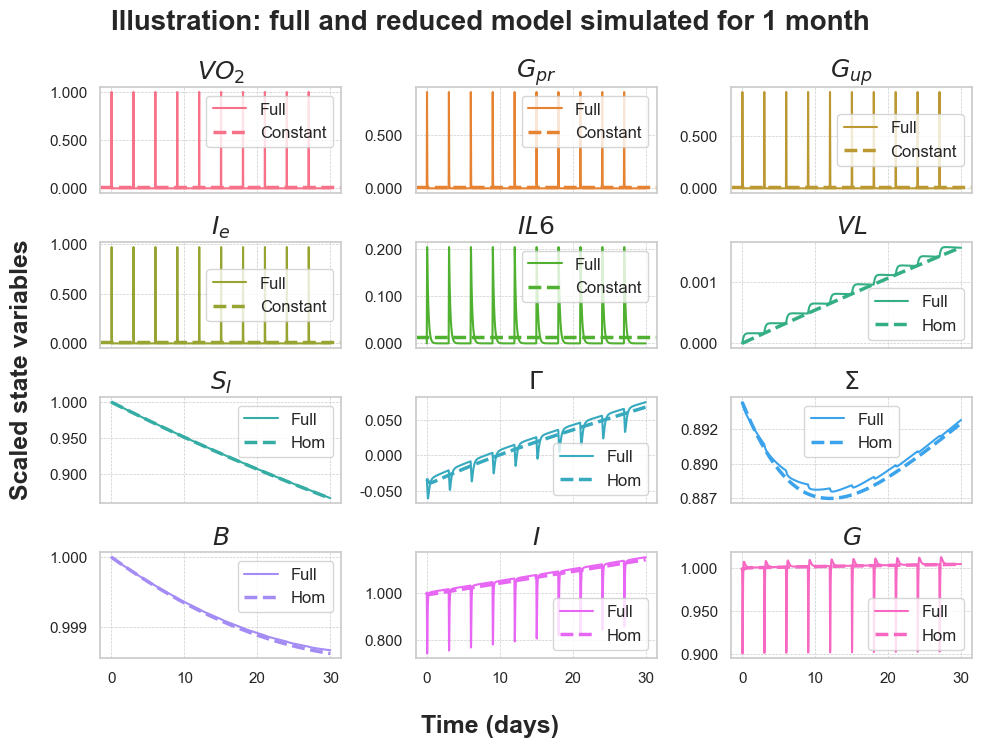}
    \caption{Illustration of the numerical solution of the full model (solid lines) and the reduced model (dashed lines) for all 12 state variables for one month. All the parameters are chosen as outlined in Appendix~\ref{appendix:parameters}.}\label{fig:Sol_1month}
\end{figure}
To examine the long-term behavior of both the full and the reduced model, we focus on the solutions at basal times, referring to the moments of the start of physical activity.
Figure~\ref{fig:Sol_5y_basal}
presents the solutions over a five-year period.
\begin{figure}[htp]
    \centering
    \includegraphics[width=0.8\textwidth]{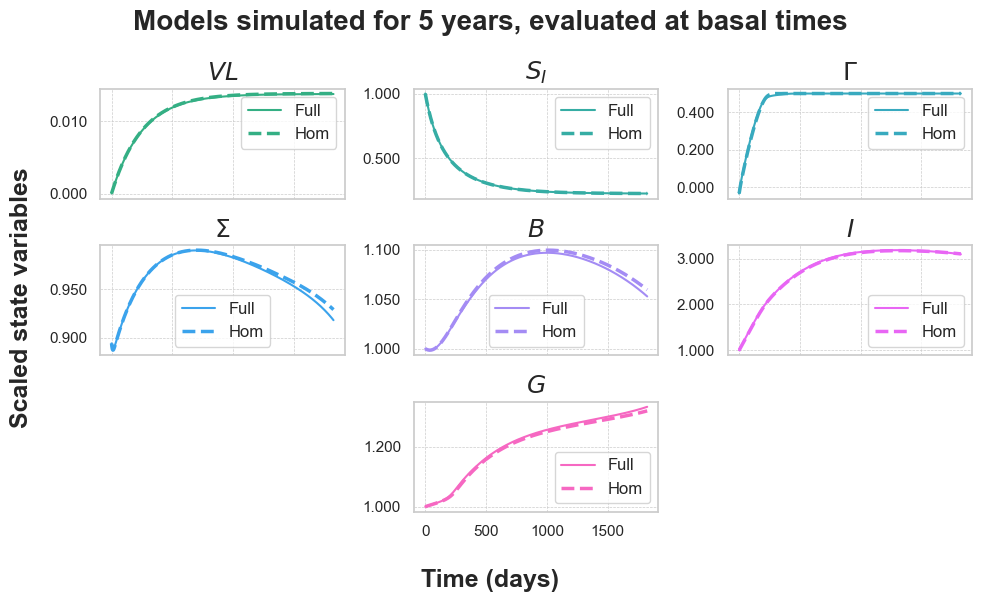}
    \caption{Solution at basal times (moments of the start of physical activity) of the full model (solid lines) and the reduced model (dashed) for five years. All the parameters are chosen as outlined in Appendix~\ref{appendix:parameters}.}\label{fig:Sol_5y_basal}
\end{figure}

\subsection{Simulation study}

To assess the speedup and the approximation error made by reducing the system,
we vary six parameters and three initial conditions, resulting
in $19\,683$ parameter configurations, and simulate the full and the reduced model over a 5-year period.
Following the definition of the approximation error in Subsection~\ref{subsec:err},
we define the maximal deviation at basal times between the two systems over 5 years for simulation $i$ as $E_i^{\mathrm{basal}}$. This metric captures the largest discrepancy between the two systems
and thus provides a conservative upper bound on the approximation error.
We also consider the pointwise error at 5 years, $E_i^{\mathrm{5y}}$.
Since classification into normoglycemia or
type 2 diabetes is based on glucose thresholds, we use $E_i^{\mathrm{5y}}$ to check whether the
reduced model does
not incorrectly classify subjects as normoglycemic when the full model would indicate type 2 diabetes.
Details on the parameters and the error measures
can be found in Appendix~\ref{appendix:simulations}.
All simulations were performed using the
\texttt{solve\_ivp} implementation with \texttt{method = 'Radau'} from the \texttt{scipy} package
on a computing server equipped with a 64 core AMD EPYC 7763 CPU and one Nvidia GeForce RTX 4090 GPU.

On average, simulating the full model over a five-year period requires 98 seconds (SD 26 seconds).
The reduced model requires no constraint on the step size,
resulting in a substantial computational speedup.
On average, one simulation over the five-year period takes 0.053 seconds (SD 0.01 seconds), corresponding to a reduction in
computational time of a factor 1914 (SD 655).
This slighlty exceeds the theoretically expected speedup factor of $\lambda_t$ = 1440, reflecting the relationship between the two timescales.

A box plot summarizing the distribution of $E_i^{\mathrm{basal}}$ across the $19\,683$ simulations is shown in Figure~\ref{fig:Linfty}.
\begin{figure}[htp]
    \centering
    \includegraphics[width=0.8\textwidth]{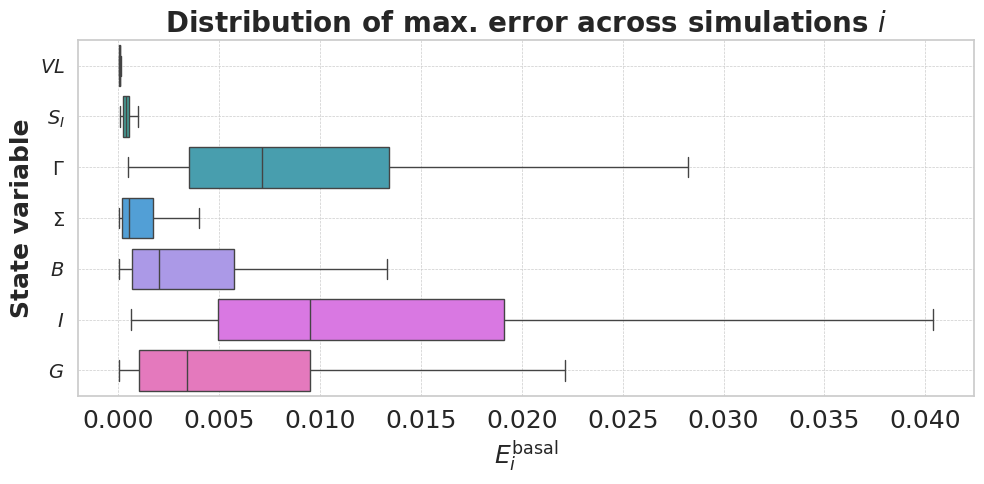}
    \caption{Box plot showing the distribution of the maximal deviation $E_i^{\mathrm{basal}}$ between the full and the reduced model across $19\,683$ simulations.}\label{fig:Linfty}
\end{figure}
These findings underline that the maximal error across the simulations remains reasonably small.
A box plot summarizing the distribution of $E_i^{\mathrm{5y}}$ across the $19\,683$ simulations is given in Figure~\ref{fig:5y}.
\begin{figure}[htp]
    \centering
    \includegraphics[width=0.8\textwidth]{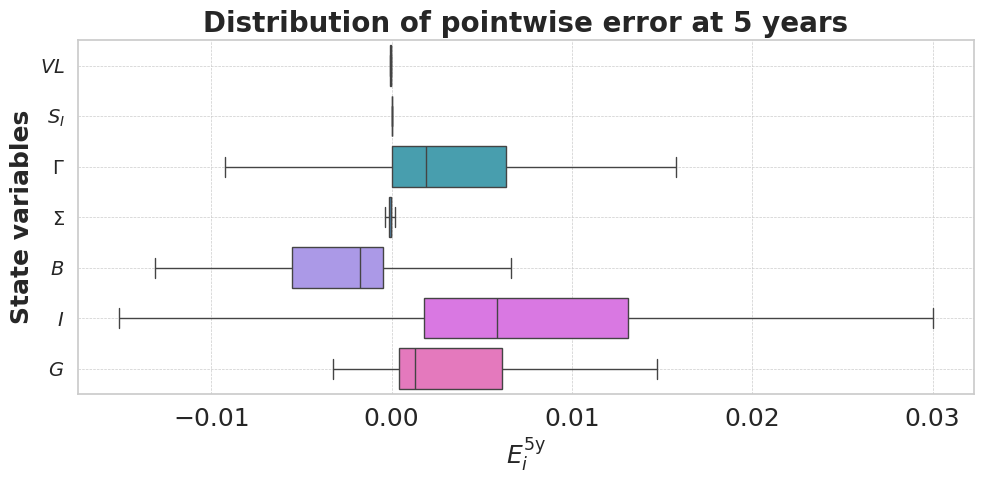}
    \caption{
    Box plot showing the distribution of the deviation $E_i^{\mathrm{5y}}$ between the full and the reduced model across $19\,683$ simulations at 5 years.}\label{fig:5y}
\end{figure}
The pointwise error of glucose $G$ at 5 years is small and negative in only $5$ out of $19\,683$ simulations.
This confirms that the reduced model closely approximates the full model while achieving a $\lambda_t$-fold speedup. Furthermore, the risk of misclassification based on glucose values is minimal.
\section{Discussion}\label{sec:discussion}

In this work, we developed and analyzed a reduced,
computationally efficient version
of a two-scale model capturing the
long-term effects of physical activity on blood
glucose regulation.
The reduced model retains the essential dynamics of
the original formulation while decreasing the
computational cost of simulating multi-year trajectories to type 2 diabetes.
This gain in efficiency allows for extensive patient-specific simulations and
the systematic evaluation of
physical activity plans in a personalized context.

The computational speedup was achieved by analytically
reducing the original model using
the periodic nature of the physical activity input (control variable $u$).
We analytically solved the short-term effects of physical activity
and passed them to the long-term
dynamics as averaged quantities.
By treating the reduction as a perturbation of the
right-hand side, we proved that the approximation
error remains bounded.
These results are underlined with a simulation study where
nine key parameters were varied.
This confirmed that the
reduced model reproduces the full model’s dynamics with
negligible long-term discrepancies.
The resulting model lowers the simulation time of a
five-year trajectory
by almost a factor 2000.
Indeed, simulating a single parameter set over a five-year period takes around 0.05 seconds,
making the fast simulation of many scenarios feasible.

The proposed formulation was designed to solve
the original system with minimal
computational effort while maintaining accuracy.
Alternative simplifications, such as using semi-analytical solutions
of the short-term equations directly
or replacing the physical activity input with a constant average input,
would still require
the numerical solver to take small time steps to resolve
the oscillations. In contrast,
the proposed formulation removes these fast variations analytically,
allowing the solver to take large time steps (i.e., days instead of minutes) and achieve a significant
computational speedup, while retaining the physiologically meaningful link between exercise and glucose regulation.

The considerable reduction in computational effort extends
the applicability of the reduced model to the field of medical decision support.
Personalized physical activity plans, along with the
uncertainty around their impact, can be evaluated
efficiently, for example in a Bayesian framework.
This model has been integrated into a causal learning framework
to assess the impact of
hypothetical physical activity plans in slowing down or preventing progression to type 2 diabetes in at-risk individuals~\cite{lenatti_counterfactual_2025}.
This approach provides a foundation for what-if analyses and supports personalized decision-making in clinical settings.

On the theoretical side, we established formal existence
and uniqueness results and introduced a scaled formulation
that improves numerical stability.
The reduction approach draws on concepts from homogenization theory
but departs from its classical asymptotic framework,
which typically examines model behavior as the timescale
separation tends to
infinity~\cite{sanders_averaging_2007,pavliotis_multiscale_2008}.
In contrast, the proposed formulation assumes a fixed, physiologically
meaningful separation of timescales and evaluates how
accurately the reduced model reproduces the full system
over a realistic five-year horizon.
A more formal asymptotic analysis based on averaging
theory and including convergence
results would be a valuable direction for future work.

As with previous formulations~\cite{de_paola_long-term_2023,de_paola_modeling_2025}, this model describes
an average patient and assumes that physical activity
follows a periodic pattern over long timescales. 
It does not explicitly account for factors such as age,
sex, family history of type 2 diabetes, or diet.
Furthermore, it has not been validated against longitudinal clinical data due to the scarcity of such datasets.
The nonlinear interaction terms driving the glucose-insulin
dynamics also contribute to system stiffness and
increase the complexity of both analysis and numerical
simulation~\cite{li_modeling_2006,shabestari_new_2018}.

In summary, this work introduces a mathematically grounded
and computationally efficient formulation of a physiologically
detailed glucose regulation model.
By leveraging homogenization principles,
we achieve a reduction that enables fast,
accurate long-term simulations and extends the
model’s usability to personalized type 2 diabetes risk prediction.
More broadly, this approach illustrates how mechanistic
models in systems biology can benefit from mathematical
reduction and analysis techniques to achieve both
rigor and scalability.

\section{Acknowledgment}
This work was supported by the European Union and by the Swiss State Secretariat for Education, Research and Innovation (SERI) through project PRAESIIDIUM
”Physics informed machine learning-based prediction and reversion of impaired
fasting glucose management” under Grant 101095672.
Views and opinions expressed are however those of the authors
only and do not necessarily reflect those of the European Union and SERI.
The European Union and SERI cannot be held responsible for them.

Pierluigi Francesco De Paola is a PhD student enrolled in the National PhD Program in Autonomous Systems (DAUSY), coordinated by Politecnico of Bari, Bari, Italy.

Marta Lenatti is a PhD
student enrolled in the National PhD in Artificial Intelligence, XXXVIII cycle, course
on Health and life sciences, organized by Università Campus Bio-Medico di Roma.

The authors thank Fabrizio Dabbene, Alessandro Borri and
Pasquale Palumbo for their valuable discussions about the full model.

\section{Code availability}
The code supporting this work is publicly available at:\\
\url{https://gitlab-core.supsi.ch/dti-idsia/Homogen_ODEs_T2D}

\section{Appendix}

\subsection{Scaling of the model}\label{appendix:scaling}

We adopt the following notation:
state variables are represented by capital letters,
using either Greek or Latin characters.
Parameters, which are always constant in time, are consistently denoted by lowercase
Greek letters and may include subscripts.
Functions are written as lowercase Latin letters, and also may include subscripts.

\subsubsection{Scaling of the short-term equations}

We rescale the original system~\cite{de_paola_long-term_2023, de_paola_modeling_2025} based on the magnitude of the state variables and introduce a unified timescale.
The equations on a minute-scale read
\begin{alignat*}{2}
& \dt \VO &&= \theta u(t) - \theta \VO, \\
& \dt \Gpr &&= \alpha_1 \VO - \alpha_2 \Gpr, \\
& \dt \Gup &&= \alpha_3 \VO - \alpha_4 \Gup, \\
& \dt \Ie &&= \alpha_5 \VO - \alpha_6 \Ie, \\
& \dt \IL &&= \kappa_\mathrm{SR} \VO - \kappa_{\mathrm{IL6}} \IL, \\
& \dt \VL &&= \IL - \kappa_{\mathrm{s}} \VL,
\end{alignat*}
with the initial conditions being $0$ for all the equations.
The control $u(t)$ is introduced as
\[u(t) = 
\begin{cases}
\xi & \text{for } 0 \leq t \leq \text{duration (min)},\\
0 & \text{for } \text{duration (min)} < t < \text{period length (min)},
\end{cases}
\]
where $\xi$ is the physical activity intensity. The period length refers to the time between the start of two consecutive physical activity sessions.
Note that the state variable $\VL$ is included in this set of equations because it was originally parameterized in minutes
and hence requires the short-term scaling.
However, since it has a long-term effect we will
include it in the long-term equations for the homogenization further below.

The original units of the state variables, along with their description, are given in Table~\ref{table:units_short_term}.
\begin{table}[!htb]
    \centering
    \caption{Short description and units of the short-term state variables in their unscaled form.}\label{table:units_short_term}
    \begin{tabular}{r l l}
    \toprule
Variable & Description & Unit \\
        \midrule
        $\VO$ & Oxygen consumption during exercise & given in \%\\
        $\Gpr$ & Incremental hepatic glucose production & mg/(kg min) \\
        $\Gup$ & Increased glucose uptake by working tissues & mg/(kg min) \\
        $\Ie$ & Incremental insulin removal & $\mu$U/ml \\
        $\IL$ & Concentration of IL-6 in the muscle & pg/ml\\
        $\VL$ & Integral effect of IL-6 released during exercise & (pg/ml) min \\
        \bottomrule
    \end{tabular}
\end{table}
To solve these ODEs simultaneously with the long-term equations defined at a daily timescale,
we first convert the short-term equations to days.
Additionally, we scale the state variables to be dimensionless.
Following standard procedures for the scaling of
ODEs~\cite{langtangen_scaling_2016}, we introduce
\[
\overline{t} \isdef \frac{t}{\lambda_t}\;\text{and}\;
\overline{\Omega} \isdef \frac{\Omega}{\lambda_\Omega}, \; \Omega \in \{t, \VO, \Gpr, \Gup, \Ie, \IL, \VL \}.
\]
Substituting into both sides of the unscaled system and rearranging yields
\begin{alignat*}{2}
& \dtbar \overline{\VO} &&= \lambda_t\bigg(\frac{\theta}{\lambda_{\VO}} u(\lambda_t \overline{t}) - \theta \overline{\VO}\bigg), \\
& \dtbar \overline{\Gpr} &&=
\lambda_t\bigg(\frac{\alpha_1 \lambda_{\VO}}{\lambda_{\Gpr}}\overline{\VO} - \alpha_2 \overline{\Gpr}\bigg), \\
& \dtbar \overline{\Gup} &&=
\lambda_t\bigg(\frac{\alpha_3 \lambda_{\VO}}{\lambda_{\Gup}}\overline{\VO} - \alpha_4 \overline{\Gup}\bigg), \\
& \dtbar \overline{\Ie} &&=
\lambda_t\bigg(\frac{\alpha_5 \lambda_{\VO}}{\lambda_{\Ie}}\overline{\VO} - \alpha_6 \overline{\Ie}\bigg), \\
& \dtbar \overline{\IL} &&=
\lambda_t\bigg(\frac{\kappa_\mathrm{SR}\lambda_{\VO}}{\lambda_{\IL}}\overline{\VO} - \kappa_\mathrm{IL6} \overline{\IL}\bigg), \\
& \dtbar \overline{\VL} &&=
\lambda_t\bigg(\frac{\lambda_{\IL}}{\lambda_{\VL}}\overline{\IL} - \kappa_{\mathrm{s}} \overline{\VL}\bigg).
\end{alignat*}
The initial conditions are $0$ for all equations, hence no scaling is necessary.
We now select the scaling constants as specified in Appendix~\ref{appendix:parameters}. Note that the choice of $\lambda_t$ transforms the time from minutes to days. All the other constants are chosen to make the state variables unitless and to scale the system to take values in the interval $[0,1]$.
We conclude the scaling of the short-term state variables by noting that, to include them in the long-term equations, they must be rescaled to their original units. In particular, state variables with minute-based units must be converted to days as follows:
\[
\Gpr^{\textrm{day}} \isdef \lambda_t \Gpr, \;
\Gup^{\textrm{day}} \isdef \lambda_t \Gup, \;
\VL^{\textrm{day}} \isdef \frac{\VL}{\lambda_t}.
\]

\subsubsection{Auxiliary functions for the long-term equations}

We provide a concise overview of the set of auxiliary functions originally introduced in the appendix of the publication
by Ha et~al.~\cite{ha_mathematical_2016}.
We first define the two functions $q_h$ and $q_e$ as follows:
\begin{align*}
q_h(x;\pi_1,\pi_2) &\isdef \frac{x^{\pi_2}}{x^{\pi_2} + \pi_1^{\pi_2}},
\; \text{for}\; \pi_1\in\mathbb{R}^+, \; \pi_2 \in 2\mathbb{N}^+,\\
q_e(x;\pi_1,\pi_2,\pi_3,\pi_4) &\isdef \frac{\pi_1}{1 + \pi_4 \exp\big(-\frac{x - \pi_2}{\pi_3}\big)} \; \text{for}\;\pi_1, \pi_2, \pi_3, \pi_4\in\mathbb{R}^+,
\end{align*}
where $x\in\mathbb{R}$.
Under the listed constraints on the constants, both functions are Lipschitz-continuous and bounded in $x$.
With these functions at hand, we define the following
auxiliary functions, following the notation of the original publications wherever possible.
A list of all the parameters introduced here can be found in Appendix~\ref{appendix:parameters}.
The first set of functions is based on the Hill-type function $q_h$:
\begin{align*}
m(G) &= q_h(G;\alpha_\mathrm{M},2),\\
r(\Gamma, \Sigma, G) &= \Sigma q_h\big(m(G) + \Gamma;\alpha_\mathrm{ISR},2\big),\\
p(\VL, \Gamma, \Sigma, G) &= \phi_{\mathrm{max}}
q_h\big(r(\Gamma, \Sigma, G);\alpha_\mathrm{P},4\big)\big(1 + \zeta_1q_h(\VL;\kappa_\mathrm{n},2)\big),\\
a(\VL, G) &= \Big(
\alpha_\mathrm{max}q_h\big(m(G);\alpha_\mathrm{A},6\big) + \alpha_\mathrm{B} \Big) \Big(1 - \zeta_2q_h(\VL;\kappa_\mathrm{n},2)\Big).
\end{align*}
It can be shown that all of these functions are Lipschitz-continuous with respect to all of their arguments.
Furthermore, it can be verified that the functions $m$, $p$ and $a$ are bounded. The function $r$ is only bounded if $\Sigma$ is bounded.

The second set requires the function $q_e$:
\begin{align*}
g_\infty(G) &= q_e(G;\gamma_{\mathrm{max}},\gamma_\mathrm{S},\gamma_\mathrm{n},1) - \gamma_\theta,\\
s_{\mathrm{ISR\infty}}(\Gamma, \Sigma, G) &= 
q_e\big(r(\Gamma, \Sigma, G - \kappa_{\sigma\mathrm{s}});\sigma_{\mathrm{ISRmax}},\sigma_\mathrm{ISRs},\sigma_\mathrm{ISRn},\sigma_{\mathrm{ISRk}}\big),\\
s_{\mathrm{M\infty}}(G) &= 1 - q_e\big(m(G - \kappa_{\sigma\mathrm{s}});\sigma_{\mathrm{Mmax}},\sigma_\mathrm{Ms},\sigma_\mathrm{Mn},\sigma_{\mathrm{Mk}}\big),\\
s_{\infty}(\Gamma, \Sigma, G) &= s_{\mathrm{ISR\infty}}(\Gamma, \Sigma, G) s_{\mathrm{M\infty}}(G) + \sigma_\mathrm{B}.
\end{align*}
Again, it is clear that all of these functions are Lipschitz-continuous and bounded in all of their arguments, as long as all of the parameters are positive.

\subsubsection{Scaling of the long-term equations}

The long-term equations are written down in the original publication as follows:
\begin{alignat*}{2}
& \dt \SI &&= \bigg(\frac{\theta_{\SI} - \SI}{\tau_{\SI}}\bigg)
\bigg(1 - \zeta_3 \frac{\VL^{\textrm{day}}}{k_{n\SI} + \VL^{\textrm{day}}}\bigg),\\
& \dt \Gamma &&= \frac{g_{\infty}(G) - \Gamma}{\tau_{\Gamma}},\\
& \dt \Sigma &&= \frac{s_{\infty}( \Gamma, \Sigma, G) - \Sigma}{\tau_{\Sigma}},\\
& \dt B &&= \frac{p(\VL^{\textrm{day}}, \Gamma, \Sigma, G) - a(\VL^{\textrm{day}}, G)}{\tau_{B}}B,\\
& \dt I &&= \frac{r(\Gamma, \Sigma, G)}{\upsilon}B  - \kappa I - \Ie,\\
& \dt G &&= \rho_0 + \frac{\omega}{\upsilon_g}(\Gpr^{\textrm{day}} - \Gup^{\textrm{day}}) - (\eta_0 + \SI I) G,
\end{alignat*}
with initial conditions $\SI(0) = {\SI}_0$,
$\Gamma(0) = \Gamma_0$,
$\Sigma(0) = \Sigma_0$, $B(0) = B_0$,
$I(0) = I_0$ and $G(0) = G_0$.
The five auxiliary functions
$g_\infty$, $s_\infty$, $p$, $a$ and $r$ have been defined above.
The original units of the introduced long-term state variables, along with their description, are given in Table~\ref{table:units_long_term}.
\begin{table}[htbp]
    \centering
    \caption{Short description and units of the long-term state variables in their unscaled form.}\label{table:units_long_term}
    \begin{tabular}{r l l}
    \toprule
    Variable & Description & Unit \\
        \midrule
        $\SI$ & Insulin sensitivity & ml/($\mu$U day)\\
        $\Gamma$ & Shift of the glucose dependence & - \\
        $\Sigma$ & Insulin secretion capacity& $\mu$U/ ($\mu$g day) \\
        $B$ & Beta cell mass & mg \\
        $I$ & Serum insulin concentration & $\mu$U/ml \\
        $G$ & Plasma glucose concentration& mg/dl \\
        \bottomrule
    \end{tabular}
\end{table}

Again, we introduce the following scaling of the state variables
\[
\overline{\Omega} \isdef \frac{\Omega}{\lambda_\Omega}, \; \Omega \in \{\SI, \Gamma, \Sigma, B, I, G \}.
\]
The chosen scaling parameters, along with all the parameters introduced, are listed in Appendix~\ref{appendix:parameters}.
Note that $t$ does not require scaling since we want to solve the final system in days.
For simplification, we introduce the following auxiliary functions:
\begin{equation}
\begin{aligned}
d_{\lambda}\big(\overline{\VL}\big) &\isdef \frac{1}{{\lambda_{\SI}}}\Bigg(1 - \zeta_3\bigg(\frac{(\lambda_{\VL}/\lambda_t)\overline{\VL}}{\kappa_{\mathrm{n}\SI} + (\lambda_{\VL}/\lambda_t)\overline{\VL}}\bigg)\Bigg),\\
g_{\lambda}\big(\overline{G}\big) &\isdef \frac{1}{\lambda_\Gamma}g_{\infty}\big(\lambda_G\overline{G}\big),\\
s_{\lambda}\big(\overline{\Gamma}, \overline{\Sigma}, \overline{G}\big) &\isdef \frac{1}{\lambda_{\Sigma}}s_{\infty}\big(\lambda_\Gamma \overline{\Gamma}, \lambda_{\Sigma}\overline{\Sigma}, \lambda_G\overline{G}\big)\\
p_\lambda\big(\overline{\VL}, \overline{\Gamma}, \overline{\Sigma}, \overline{G}\big) &\isdef p\big((\lambda_{\VL}/\lambda_t)\overline{\VL}, \lambda_\Gamma \overline{\Gamma}, \lambda_{\Sigma}\overline{\Sigma}, \lambda_G\overline{G}\big),\\
a_\lambda\big(\overline{\VL}, \overline{G}\big) &\isdef a\big((\lambda_{\VL}/\lambda_t)\overline{\VL}, \lambda_G\overline{G}\big),\\
r_{\lambda}\big(\overline{\Gamma}, \overline{\Sigma}, \overline{G}\big) &\isdef \frac{\lambda_B}{\lambda_I\upsilon}r\big(\lambda_\Gamma \overline{\Gamma}, \lambda_{\Sigma}\overline{\Sigma}, \lambda_G\overline{G}\big).
\end{aligned}\label{eq:final_aux_fun}
\end{equation}
We furthermore define the following constants:
\[
\lambda_{\Ie I} \isdef \frac{\lambda_{\Ie}}{\lambda_I}, \quad
\rho_{\lambda} \isdef \frac{\rho_0}{\lambda_G}, \quad
\lambda_{tG} \isdef \frac{\lambda_t}{\lambda_G\upsilon_g}, \quad
\lambda_{\SI I} \isdef \lambda_{\SI}\lambda_I.
\]
After substituting the scaling parameters and rewriting, we obtain the following system of ODEs:
\begin{alignat*}{2}
& \dt \overline{\SI} &&= d_{\lambda}\big(\overline{\VL}\big)\bigg(\frac{\theta_{\SI} - \lambda_{\SI}\overline{\SI}}{\tau_{\SI}}\bigg),\\
& \dt \overline{\Gamma} &&= \frac{g_{\lambda}\big(\overline{G}\big) - \overline{\Gamma}}{\tau_{\Gamma}},\\
& \dt \overline{\Sigma} &&= \frac{s_{\lambda}\big(\overline{\Gamma}, \overline{\Sigma}, \overline{G}\big) - \overline{\Sigma}}{\tau_{\Sigma}},\\
& \dt \overline{B} &&= 
\frac{p_\lambda\big(\overline{\VL}, \overline{\Gamma}, \overline{\Sigma}, \overline{G}\big) - a_\lambda\big(\overline{\VL}, \overline{G}\big)}{\tau_{B}}
\overline{B},\\
& \dt \overline{I} &&= r_{\lambda}\big(\overline{\Gamma}, \overline{\Sigma}, \overline{G}\big)\overline{B}  - \kappa \overline{I} - \lambda_{\Ie I}\overline{\Ie},\\
& \dt \overline{G} &&= \rho_{\lambda} + \lambda_{tG}\omega\big(\lambda_{\Gpr}\overline{\Gpr} - \lambda_{\Gup}\overline{\Gup}\big) - (\eta_0 + \lambda_{\SI I} \overline{\SI} \overline{I})\overline{G},
\end{alignat*}
together with the initial conditions
$\SI(0) = 1$,
$\Gamma(0) = \Gamma_0/\gamma_{\textrm{max}} \defis \Gamma_{0\lambda}$,
$\Sigma(0) = \Sigma_0/\sigma_{\mathrm{ISRmax}} \defis \Sigma_{0\lambda}$, $B(0) = 1$,
$I(0) = 1$ and $G(0) = 1$.

\subsection{Parameters}\label{appendix:parameters}

A list of the parameters introduced in the short-term equations ($\mathbf{y_1}$) can be found in Table~\ref{table:params_short_term}.
\begin{table}[!htb]
\centering
\caption{Parameters related to the control $u(t)$ (top), the system of ODEs for $\mathbf{y_1}$ (middle) and the scaling of $\mathbf{y_1}$ (bottom). Standard values or the formula and the respective units are given in columns 2 and 3.}\label{table:params_short_term}
\begin{tabular}{r r l}
\toprule
Parameter & Definition & Unit \\
\midrule
$\nu$ & 3 & day\\
$\delta$ & 60/1440 & day\\
$\xi$ & 50 & \%\\
\midrule
$\theta$ & 0.8 & 1/min \\
$\alpha_1$ & 0.00158 & mg/(kg min$^2$)\\
$\alpha_2$ & 0.056 & 1/min\\
$\alpha_3$ & 0.00195 & mg/(kg min$^2$)\\
$\alpha_4$ & 0.0485 & 1/min\\
$\alpha_5$ & 0.00125 & $\mu$U/(ml min)\\
$\alpha_6$ & 0.075 & 1/min\\
$\kappa_\mathrm{SR}$ & 0.045 & pg/(ml min)\\
$\kappa_\mathrm{IL6}$ & 0.004 & 1/min\\
\midrule
$\lambda_t$ & $1440$ & min/day\\
$\lambda_{\VO}$ & $\xi$ & given in \%\\
$\lambda_{\Gpr}$ & $\lambda_{\VO}\alpha_1/\alpha_2$ & mg/(kg min)\\
$\lambda_{\Gup}$ & $\lambda_{\VO}\alpha_3/\alpha_4$ & mg/(kg min)\\
$\lambda_{\Ie}$ & $\lambda_{\VO}\alpha_5/\alpha_6$ & $\mu$U/ml\\
$\lambda_{\IL}$ & $\lambda_{\VO}\kappa_\mathrm{SR}/\kappa_{\mathrm{IL6}}$ & pg/ml\\
\bottomrule
\end{tabular}
\end{table}
The parameters related to the state variable $\VL$ are reported below together with the long-term state variables.
The physical activity parameters $\nu$ and $\delta$ need to be chosen to allow for a maximum of 400 minutes of exercise per week, the intensity parameter $\xi$ needs to lie within $[0,92]\%$.

A list of the parameters introduced for the auxiliary functions can be found in Table~\ref{table:params_aux}.
\begin{table}[!htb]
\centering
\caption{Parameters used for the auxiliary functions building upon $q_h$ (top) and the auxiliary functions building upon $q_e$ (bottom). Standard values or the formula and the respective units are given in columns 2 and 3.}\label{table:params_aux}
\begin{tabular}{r r l}
\toprule
Parameter & Definition & Unit\\
\midrule
$\alpha_\mathrm{M}$ & 150 & mg/dl\\
$\alpha_\mathrm{ISR}$ & 1.2 & -\\
$\phi_{\mathrm{max}}$ & 4.55 & 1/day\\
$\alpha_\mathrm{P}$ & 41.77 & $\mu$U/($\mu$g day)\\
$\zeta_1$ & $10^{-4}$ & -\\
$\kappa_\mathrm{n}$ & $10^{6}/\lambda_t$ & (pg/ml) day\\
$\alpha_\mathrm{max}$ & 9 & 1/day\\
$\alpha_\mathrm{A}$ & 0.44 & -\\
$\alpha_\mathrm{B}$ & 0.8 & 1/day\\
$\zeta_2$ & $10^{-4}$ & -\\
\midrule
$\gamma_{\mathrm{max}}$ & 0.2 & -\\
$\gamma_\mathrm{S}$ & 99.9 & -\\
$\gamma_\mathrm{n}$ & 1& -\\
$\gamma_\theta$ & 0.1& -\\
$\kappa_{\sigma\mathrm{s}}$ & 75& mg/dl \\
$\sigma_{\mathrm{ISRmax}}$ & 600& $\mu$U/($\mu$g day)\\
$\sigma_\mathrm{ISRs}$ & 0.1& -\\
$\sigma_\mathrm{ISRn}$ & 0.1& -\\
$\sigma_{\mathrm{ISRk}}$ & 1& -\\
$\sigma_{\mathrm{Mmax}}$ & 1& -\\
$\sigma_\mathrm{Ms}$ & 0.2& -\\
$\sigma_\mathrm{Mn}$ & 0.02& -\\
$\sigma_{\mathrm{Mk}}$ & 0.2& -\\
$\sigma_\mathrm{B}$ & 3& $\mu$U/($\mu$g day)\\
\bottomrule
\end{tabular}
\end{table}
All of these parameters are constrained to be positive.

A list of the parameters introduced for the long-term equations can be found in Table~\ref{table:params_long_term}.
\begin{table}[!htb]
\centering
\caption{Parameters related to the system of ODEs for $\mathbf{y_2}$ (top), the scaling of $\mathbf{y_2}$  (middle) and the initial conditions for $\mathbf{y_2}$ (bottom).
Standard values or the formula and the respective units are given in columns 2 and 3.
}\label{table:params_long_term}
\begin{tabular}{r r l}
\toprule
Parameter & Definition & Unit \\
\midrule
$\kappa_{\mathrm{s}}$ & $-\log(0.8)/80640$ & 1/min\\
$\theta_{\SI}$ & 0.18 & ml/($\mu$U day)\\
$\tau_{\SI}$ & 150& day\\
$\zeta_3$ & 1.4& - \\
$k_{n\SI}$ & $5\times10^{6}/\lambda_t$ & (pg/ml) day\\
$\tau_{\Gamma}$ & 2.14& day\\
$\tau_{\Sigma}$ & 249.9& day\\
$\tau_{B}$ & 8570& day\\
$\upsilon$ & 5& litre\\
$\kappa$ & 700& 1/day\\
$\rho_0$ & 864& mg/(dl day)\\
$\omega$ & 70& kg \\
$\upsilon_g$ & 117 & dl\\
$\eta_0$ & 1.44& 1/day\\
\midrule
$\lambda_{\VL}/\lambda_t$ & $(\lambda_{\IL}/\kappa_{\mathrm{s}})/\lambda_t$ & (pg/ml) day\\
$\lambda_{\SI}$ & ${\SI}_0$ & ml/($\mu$U day)\\
$\lambda_{\Gamma}$ & $\gamma_{\mathrm{max}}$ & - \\
$\lambda_{\Sigma}$ & $\sigma_{\mathrm{ISRmax}}$ &$\mu$U/($\mu$g day)\\
$\lambda_{B}$ & $B_0$ & mg\\
$\lambda_{I}$ & $I_0$ & $\mu$U/ml\\
$\lambda_{G}$ & $G_0$ & mg/dl\\
$\lambda_{\Ie I}$ & $\lambda_{\Ie}/\lambda_I$ & - \\
$\rho_{\lambda}$ & $\rho_0/\lambda_G$ & 1/day \\
$\lambda_{tG}$ & $\lambda_t/(\lambda_G\upsilon_g)$ & min/(mg day) \\
$\lambda_{\SI I}$ & $\lambda_{\SI}\lambda_I$ & 1/day\\
\midrule
$\VL_0$ & 0 & (pg/ml) min\\
${\SI}_0$ & 0.8& ml/($\mu$U day) \\
$\Gamma_0$ & -0.00666& -\\
$\Sigma_0$ & 536.67&$\mu$U/ ($\mu$g day) \\
$B_0$ & 1000.423& mg \\
$I_0$ & 9.025& $\mu$U/ml \\
$G_0$ & 99.7604& mg/dl\\
$\Gamma_{0\lambda}$ & $\Gamma_0/\lambda_{\Gamma}$ & - \\
$\Sigma_{0\lambda}$ & $\Sigma_0/\lambda_{\Sigma}$ & - \\
\bottomrule
\end{tabular}
\end{table}
The units of some parameters in Table~\ref{table:params_long_term} are scaled with $\lambda_t$ in order to be in days.
Based on expert opinion, the initial conditions should be in the
following intervals:
${\SI}_0 \in [0, 0.8]$,
$\Gamma_0 \in [-0.1, 0.1]$,
$\Sigma_0 \in [3, 600]$,
$B_0 \in [0, 9000]$,
$I_0 \in [0, 100]$,
$G_0 \in [0,600]$.

\subsection{Model reduction}\label{appendix:reduction}

\subsubsection{Analytical solution of the short-term equations}

We calculate the analytical solution $\VO$ from System~\eqref{eq:y1} first in the interval $[0,\delta]$, where the control $u(t) = 1$, then use the value at $\delta$ as initial condition to solve the ODEs analytically in the interval $(\delta,\nu)$, where $u(t) = 0$:
\begin{lemma}
The solution to $\VO'(t) = \lambda_t \theta \big(1 - \VO(t)\big)$ for $0<t\leq\delta$ with initial condition $\VO(0) = 0$ is given by
\[
\VO(t) = 1 - \exp(- \lambda_t \theta t).
\]
\end{lemma}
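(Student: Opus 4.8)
The plan is to treat this as the elementary linear first-order ODE it is: solve it explicitly by an integrating factor, and then appeal to the uniqueness already established via the Picard–Lindelöf argument in Subsection~\ref{subsec:PicardLindeloef} to conclude that the closed-form expression is indeed \emph{the} solution on $[0,\delta]$.

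First I would rewrite the equation in the form $\VO'(t) + \lambda_t\theta\,\VO(t) = \lambda_t\theta$ and multiply by the integrating factor $\exp(\lambda_t\theta t)$, so that the left-hand side becomes $\dt\big(\exp(\lambda_t\theta t)\,\VO(t)\big)$. Integrating from $0$ to $t$ and inserting the initial condition $\VO(0)=0$ yields $\exp(\lambda_t\theta t)\,\VO(t) = \exp(\lambda_t\theta t) - 1$, and dividing through by $\exp(\lambda_t\theta t)$ gives the claimed identity $\VO(t) = 1 - \exp(-\lambda_t\theta t)$.

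As a direct check one verifies that this candidate satisfies the initial condition, since $\VO(0) = 1 - \exp(0) = 0$, and the ODE, since $\VO'(t) = \lambda_t\theta\exp(-\lambda_t\theta t) = \lambda_t\theta\big(1 - \VO(t)\big)$. Because the right-hand side $(t,x)\mapsto \lambda_t\theta(1-x)$ is continuous in $t$ and globally Lipschitz in $x$, the Picard–Lindelöf theorem guarantees that this solution is the unique one on $[0,\delta]$, completing the argument.

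There is genuinely no obstacle here; the only point worth a moment's care is that the alternative derivation by separation of variables, writing $\VO'/(1-\VO) = \lambda_t\theta$, formally presupposes $\VO(t)\neq 1$. This is harmless, since the resulting solution satisfies $\VO(t) = 1 - \exp(-\lambda_t\theta t) < 1$ throughout $[0,\delta]$, and in any case the integrating-factor computation above sidesteps the issue entirely.
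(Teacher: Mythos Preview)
Your proof is correct. The paper states this lemma without proof, treating it as the elementary fact it is; your integrating-factor derivation together with the direct verification and the appeal to Picard--Lindel\"of uniqueness is exactly the kind of routine argument the paper is implicitly taking for granted.
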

Inserting $\delta$ yields
$\VO(\delta) = 1 - \exp(-\lambda_t \theta \delta)$, which is
the initial condition for the next interval:
\begin{lemma}
The solution to $\VO'(t) = - \lambda_t \theta \VO(t)$ 
for $\delta<t<\nu$ with initial condition 
$\VO(\delta) = 1 - \exp(-\lambda_t \theta \delta)$ is given by
\[
\VO(t) = \big(\exp(\lambda_t \theta \delta) -1\big)\exp(-\lambda_t \theta t).
\]
\end{lemma}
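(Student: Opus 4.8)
The plan is to treat this as a routine integration of a first-order linear homogeneous ODE with constant coefficients, followed by fixing the integration constant via the prescribed initial value at $t = \delta$. First I would observe that on the open interval $(\delta, \nu)$ the control satisfies $u(t) = 0$, so the equation for the oxygen-consumption variable reduces to the autonomous linear ODE $\VO'(t) = -\lambda_t \theta \VO(t)$, whose general solution is $\VO(t) = C \exp(-\lambda_t \theta t)$ for an arbitrary constant $C \in \Rbb$. This can be obtained either by separation of variables, rewriting the equation as $\mathrm{d}\VO / \VO = -\lambda_t \theta \, \mathrm{d}t$ and integrating both sides, or simply by recalling the standard exponential form of solutions to such equations.

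Next I would determine $C$ by imposing the initial condition $\VO(\delta) = 1 - \exp(-\lambda_t \theta \delta)$, which is inherited from evaluating the solution on $[0, \delta]$ at its right endpoint. Substituting $t = \delta$ into the general form gives $C \exp(-\lambda_t \theta \delta) = 1 - \exp(-\lambda_t \theta \delta)$, and dividing through by the factor $\exp(-\lambda_t \theta \delta)$ yields $C = \exp(\lambda_t \theta \delta) - 1$. Inserting this value back into the general solution produces exactly the claimed closed form $\VO(t) = \big(\exp(\lambda_t \theta \delta) - 1\big) \exp(-\lambda_t \theta t)$.

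Finally, to close the argument rigorously I would verify directly that this candidate solves the initial-value problem. Differentiating gives $\VO'(t) = -\lambda_t \theta \big(\exp(\lambda_t \theta \delta) - 1\big) \exp(-\lambda_t \theta t) = -\lambda_t \theta \VO(t)$, confirming the ODE, and evaluating at $t = \delta$ recovers the prescribed initial value. Uniqueness of the solution is already guaranteed by the Picard--Lindelöf argument established in Subsection~\ref{subsec:PicardLindeloef}, since the right-hand side is linear and therefore Lipschitz-continuous in $\VO$. There is no genuine obstacle in this verification; the only point requiring mild care is the bookkeeping when solving for $C$, where one must correctly invert the exponential factor $\exp(-\lambda_t \theta \delta)$ rather than inadvertently use its reciprocal.
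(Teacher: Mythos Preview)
Your proposal is correct and is exactly the standard computation one would expect. The paper itself does not supply a proof for this lemma; it simply states the closed-form solution as an elementary fact, so your write-up actually fills in the details the paper leaves implicit.
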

Calculating the analytical solution for the other
four state variables from
System~\eqref{eq:y1} ($\Gpr$, $\Gup$, $\Ie$, $\IL$) follows the same steps, additionally using the analytical solution of $\VO$.
For the sake of simplicity, we just illustrate it for $\Gpr$:
\begin{lemma}
The solution to $\Gpr'(t) = \lambda_t \alpha_2 \big(1 - \exp(-\lambda_t \theta t) -\Gpr(t)\big)$ for $0<t\leq\delta$ with
initial condition $\Gpr(0) = 0$ is given by
\[
\Gpr(t) = 1 + \frac{\alpha_2}{\theta - \alpha_2}\exp(-\lambda_t\theta t)
- \frac{\theta}{\theta-\alpha_2} \exp(-\lambda_t \alpha_2 t).
\]
\end{lemma}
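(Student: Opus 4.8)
The plan is to treat the equation as a scalar first-order linear ODE with an exponential forcing term and solve it by the integrating-factor method, then simplify the result into the stated compact form. Rearranging the ODE into standard linear form gives
\[
\Gpr'(t) + \lambda_t \alpha_2\, \Gpr(t) = \lambda_t \alpha_2 - \lambda_t \alpha_2 \exp(-\lambda_t \theta t),
\]
for which the integrating factor is $\exp(\lambda_t \alpha_2 t)$. Multiplying through, the left-hand side becomes the exact derivative $\dt\!\big(\exp(\lambda_t \alpha_2 t)\,\Gpr(t)\big)$, while the right-hand side becomes $\lambda_t \alpha_2 \exp(\lambda_t \alpha_2 t) - \lambda_t \alpha_2 \exp(\lambda_t(\alpha_2 - \theta)t)$.

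The central step is to integrate this identity from $0$ to $t$ and impose the initial condition $\Gpr(0) = 0$. The first exponential integrates to $\exp(\lambda_t \alpha_2 t) - 1$, and the second, using that $\theta \neq \alpha_2$ for the parameter values in Appendix~\ref{appendix:parameters}, integrates to $\tfrac{\alpha_2}{\alpha_2 - \theta}\big(\exp(\lambda_t(\alpha_2 - \theta)t) - 1\big)$. Dividing through by $\exp(\lambda_t \alpha_2 t)$ then yields $\Gpr(t)$ as a combination of the constant $1$ together with $\exp(-\lambda_t \alpha_2 t)$ and $\exp(-\lambda_t \theta t)$.

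What remains is purely algebraic. Collecting the two contributions proportional to $\exp(-\lambda_t \alpha_2 t)$ and using the identity $-1 + \tfrac{\alpha_2}{\alpha_2 - \theta} = -\tfrac{\theta}{\theta - \alpha_2}$ produces exactly the coefficients $\tfrac{\alpha_2}{\theta - \alpha_2}$ in front of $\exp(-\lambda_t \theta t)$ and $-\tfrac{\theta}{\theta - \alpha_2}$ in front of $\exp(-\lambda_t \alpha_2 t)$, matching the claim. I do not expect a genuine obstacle: the equation is scalar and linear, so the only points demanding care are the nonvanishing of the denominator $\theta - \alpha_2$ and the sign bookkeeping in this last simplification. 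As an independent cross-check, one can instead verify the stated formula directly, confirming that substituting $t = 0$ gives $1 + \tfrac{\alpha_2}{\theta - \alpha_2} - \tfrac{\theta}{\theta - \alpha_2} = 0$ and that differentiating the formula reproduces $\lambda_t \alpha_2\big(1 - \exp(-\lambda_t \theta t) - \Gpr(t)\big)$.
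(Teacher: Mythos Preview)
Your proposal is correct. The paper states this lemma without a detailed proof, treating it as a routine computation; your integrating-factor argument is exactly the standard way to carry it out, and the algebraic simplification and the cross-check at $t=0$ are both accurate.
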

Inserting $\delta$ yields the initial condition for the next interval:
\begin{lemma}
The solution to
\[
\Gpr'(t) = \lambda_t \alpha_2\Big(\big(\exp(\lambda_t\theta \delta) -1\big)\exp(-\lambda_t\theta t) - \Gpr(t)\Big)
\]
for $\delta<t<\nu$
with initial condition
\[
\Gpr(\delta) = 1 + \frac{\alpha_2}{\theta - \alpha_2}\exp(-\lambda_t\theta \delta)
- \frac{\theta}{\theta-\alpha_2} \exp(-\lambda_t \alpha_2 \delta)
\]
is given by
\begin{align*}
\Gpr(t) =& \frac{\theta}{\theta-\alpha_2}\big(\exp(\lambda_t \alpha_2\delta) - 1\big) \exp(-\lambda_t \alpha_2 t)\\ &- \frac{\alpha_2}{\theta - \alpha_2}\big(\exp(\lambda_t\theta \delta) -1\big)\exp(-\lambda_t\theta t).
\end{align*}
\end{lemma}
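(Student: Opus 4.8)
The plan is to treat this statement as a scalar first-order linear ODE with constant decay rate and a single exponential forcing term, and to solve it by the classical split into a homogeneous and a particular solution. First I would rewrite the equation in standard form,
\[
\Gpr'(t) + \lambda_t\alpha_2\,\Gpr(t) = \lambda_t\alpha_2\big(\exp(\lambda_t\theta\delta) - 1\big)\exp(-\lambda_t\theta t),
\]
so that the homogeneous mode decays at rate $\lambda_t\alpha_2$. The homogeneous solution is then $C\exp(-\lambda_t\alpha_2 t)$ with $C$ to be determined.

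For a particular solution I would make the undetermined-coefficients ansatz $\Gpr_p(t) = A\exp(-\lambda_t\theta t)$, which is admissible precisely because $\theta \neq \alpha_2$ (from the parameter table $\theta = 0.8$ greatly exceeds $\alpha_2 = 0.056$, so the forcing exponent does not resonate with the homogeneous mode). Substituting and matching the coefficients of $\exp(-\lambda_t\theta t)$ gives $\lambda_t A(\alpha_2 - \theta) = \lambda_t\alpha_2\big(\exp(\lambda_t\theta\delta) - 1\big)$, hence $A = -\tfrac{\alpha_2}{\theta-\alpha_2}\big(\exp(\lambda_t\theta\delta) - 1\big)$. This already reproduces the second summand of the claimed formula.

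It remains to pin down $C$ from the initial condition at $t = \delta$. I would evaluate $\Gpr(t) = C\exp(-\lambda_t\alpha_2 t) + A\exp(-\lambda_t\theta t)$ at $t=\delta$ and equate it to the prescribed $\Gpr(\delta)$. Two simplifications drive the computation: the term $A\exp(-\lambda_t\theta\delta)$ splits, via $\exp(\lambda_t\theta\delta)\exp(-\lambda_t\theta\delta)=1$, into a constant plus a multiple of $\exp(-\lambda_t\theta\delta)$ that cancels the matching term in $\Gpr(\delta)$, and the algebraic identity $1 + \tfrac{\alpha_2}{\theta-\alpha_2} = \tfrac{\theta}{\theta-\alpha_2}$ collapses the surviving constants. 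The result is $C\exp(-\lambda_t\alpha_2\delta) = \tfrac{\theta}{\theta-\alpha_2}\big(1 - \exp(-\lambda_t\alpha_2\delta)\big)$, so $C = \tfrac{\theta}{\theta-\alpha_2}\big(\exp(\lambda_t\alpha_2\delta) - 1\big)$, which is exactly the coefficient of the first summand in the claim.

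The argument is entirely routine and poses no conceptual obstacle; the one place demanding care is the final algebra that fixes $C$, where the two cancellations above and the sign conventions must be tracked precisely, since a single sign error would yield the wrong constant. As a sanity check one may also verify directly that the proposed $\Gpr(t)$ satisfies the ODE and attains the given value at $t=\delta$, which is how I would confirm the result rather than re-deriving it from scratch.
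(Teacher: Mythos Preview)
Your proposal is correct and follows exactly the standard approach the paper has in mind: the paper states this lemma without proof, treating it as the routine linear-ODE computation you carry out, so your homogeneous-plus-particular-solution argument with the constant fixed by the initial condition at $t=\delta$ is precisely what is intended.
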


\subsubsection{Analytical average for the short-term equations}

To calculate the average $\mu_{\VO}$ of $\VO(t)$ in the interval $[0,\nu)$,
we start with
\begin{align*}
\int_0^\delta\VO(t) \,\mathrm{d}t &= \int_0^\delta1 - \exp(-\lambda_t \theta t) \,\mathrm{d}t
= \delta- \frac{1}{\lambda_t \theta}\big(1 - \exp(-\lambda_t \theta\delta)\big),\\
\int_\delta^\nu \VO(t) \,\mathrm{d}t &=
\frac{1}{\lambda_t \theta}\big(\exp(\lambda_t \theta \delta) - 1\big)\Big(
\exp(-\lambda_t \theta\delta)
-\exp(-\lambda_t \theta\nu)
\Big)\\
&= \frac{1}{\lambda_t \theta}\Big(1 - \exp(-\lambda_t \theta\delta)
+\exp(-\lambda_t \theta\nu)\big(1-\exp(\lambda_t \theta \delta)\big)
\Big).
\end{align*}
Combining these identities yields
\begin{align*}
\mu_{\VO} &= \frac{1}{\nu}\int_0^\nu \VO(t) \,\mathrm{d}t
= \frac{1}{\nu}\bigg(\int_0^\delta\VO(t) \,\mathrm{d}t + \int_\delta^\nu \VO(t) \,\mathrm{d}t\bigg)\\
&= \frac{1}{\nu}\bigg(
\delta
+\frac{1}{\lambda_t \theta}\Big(\exp(-\lambda_t \theta\nu)\big(1-\exp(\lambda_t \theta \delta)\big)
\Big)\bigg).
\end{align*}
Note that the result is approximately equal to
$\delta/\nu$, which is the average value of $u(t)$.\\
For $\Gpr$, it holds that
\begin{align*}
\int_0^\delta\Gpr(t) \,\mathrm{d}t
= \delta&- \frac{\theta}{\lambda_t \alpha_2(\theta-\alpha_2)}\big(1 - \exp(-\lambda_t \alpha_2\delta)\big)\\
&+\frac{\alpha_2}{\lambda_t \theta(\theta - \alpha_2)}\big(1 - \exp(-\lambda_t \theta\delta)\big),
\end{align*}
and
\begin{alignat*}{2}
\int_\delta^\nu \Gpr(t) \,\mathrm{d}t =
&\frac{\theta}{\lambda_t \alpha_2(\theta-\alpha_2)}
\Big(
&&1 - \exp(-\lambda_t \alpha_2\delta)\\
& &&+\exp(-\lambda_t \alpha_2\nu)\big(1-\exp(\lambda_t \alpha_2\delta)\big)\Big)\\
&-\frac{\alpha_2}{\lambda_t \theta(\theta - \alpha_2)}
\Big(
&&1 - \exp(-\lambda_t \theta\delta)\\
& &&+\exp(-\lambda_t \theta\nu)\big(1 -\exp(\lambda_t \theta\delta)\big)\Big).
\end{alignat*}
And hence,
\begin{alignat*}{2}
\mu_{\Gpr} &= \frac{1}{\nu}\bigg(
&&\int_0^\delta\Gpr(t) \,\mathrm{d}t + \int_\delta^\nu \Gpr(t) \,\mathrm{d}t\bigg)\\
& =\frac{1}{\nu}\bigg(
&&\delta
+ \frac{\theta}{\lambda_t \alpha_2(\theta-\alpha_2)}
\Big(\exp(-\lambda_t \alpha_2\nu)\big(1-\exp(\lambda_t \alpha_2\delta)\big)\Big)\\
& &&-\frac{\alpha_2}{\lambda_t \theta(\theta - \alpha_2)}
\Big(\exp(-\lambda_t \theta\nu)\big(1 -\exp(\lambda_t \theta\delta)\big)\Big)\bigg).
\end{alignat*}
The mean values for the state variables $\Gup$, $\Ie$ and $\IL$ are calculated analogously to $\Gpr$.

\subsection{Simulations for the approximation error analysis}\label{appendix:simulations}

To quantify the numerical error resulting from model reduction,
we varied six system parameters and three initial conditions,
as summarized in Table~\ref{table:varying_parameters}.
\begin{table}[htbp]
    \centering
    \caption{Parameters (top) and initial conditions (bottom) that are varied for the numerical evaluation of the approximation error, resulting in $3^9 = 19\,683$ simulations.}\label{table:varying_parameters}
    \begin{tabular}{r l}
        \toprule
        Parameter & Values \\
        \midrule
        $\nu$ & 2,4,6\\
        $\delta$ & 30/1440,45/1440,60/1440\\
        $\xi$ & 20,40,60\\
        $\theta_{\SI}$ & 0.18,0.28,0.38\\
        $\tau_{\SI}$ & 90,210,330\\
        $\omega$ & 50,90,130\\
        \midrule
        $B_0$ & 800,1000,1200\\
        $I_0$ & 5,10,15\\
        $G_0$ & 70,90,110\\
        \bottomrule
    \end{tabular}
\end{table}
Each parameter configuration is denoted by
$\Theta_i$, where $i = 1,\dots,19\,683$.
Both systems are simulated for $t_\mathrm{end} = 1824$ days (corresponding to 5 years)
and the solutions are compared every 2 days, i.e., at $t_m \isdef 2m$, where $m = 1, \dots, 912$.
To quantify the approximation error, two measures are used.
The maximal deviation between the two models over the entire time horizon under consideration,
\[
E_i^{\mathrm{basal}} \isdef \max_{m = 1, \dots, 912}
|\mathbf{y_2}(t_m,\mathbf{y_1},\Theta_i) - \widehat{\mathbf{y_2}}(t_m,\mathbf{\mu},\Theta_i)|
\in \mathbb{R}^{7},
\]
and the point-wise error after 5 years,
\[
E_i^{\mathrm{5y}} \isdef 
\big(\mathbf{y_2}(t_\mathrm{end},\mathbf{y_1},\Theta_i) - \widehat{\mathbf{y_2}}(t_\mathrm{end},\mathbf{\mu},\Theta_i)\big)\in \mathbb{R}^{7}.
\]
%%%%%%%%%%%%%%%%%%%%%%%%%%%%%%%%%%%%%%%%%%%%%%%%%%%%%%%%%%%%%%%%%%%%%%%%%%%%%%%%
\printbibliography

@article{de_paola_long-term_2023,
	title = {The long-term effects of physical activity on blood glucose regulation: {A} model to unravel diabetes progression},
	volume = {7},
	issn = {2475-1456},
	url = {https://ieeexplore.ieee.org/document/10168196/},
	doi = {10.1109/LCSYS.2023.3290774},
	journal = {IEEE Control Systems Letters},
	author = {De Paola, PF and Paglialonga, A and Palumbo, P and Keshavjee, K and Dabbene, F and Borri, A},
	year = {2023},
	pages = {2916--2921},
	file = {Full Text:/Users/lea.multerer/Zotero/storage/W3KMPWSZ/De Paola et al. - 2023 - The Long-Term Effects of Physical Activity on Bloo.pdf:application/pdf},
}

@article{ha_mathematical_2016,
	title = {A mathematical model of the pathogenesis, prevention, and reversal of type 2 diabetes},
	volume = {157},
	issn = {0013-7227, 1945-7170},
	url = {https://academic.oup.com/endo/article/157/2/624/2422698},
	doi = {10.1210/en.2015-1564},
	number = {2},
	journal = {Endocrinology},
	author = {Ha, J and Satin, LS and Sherman, AS},
	year = {2016},
	pages = {624--635},
	file = {en-15-1564.pdf:/Users/lea.multerer/Zotero/storage/RQF2K6PT/en-15-1564.pdf:application/pdf;Full Text:/Users/lea.multerer/Zotero/storage/HSM9CI5M/Ha et al. - 2016 - A Mathematical Model of the Pathogenesis, Preventi.pdf:application/pdf},
}

@article{roy_dynamic_2007,
	title = {Dynamic modeling of exercise effects on plasma glucose and insulin levels},
	volume = {1},
	issn = {1932-2968},
	doi = {10.1177/193229680700100305},
	number = {3},
	journal = {Journal of Diabetes Science and Technology},
	author = {Roy, A and Parker, RS},
	year = {2007},
	pmid = {19885088},
	pmcid = {PMC2769581},
	keywords = {diabetes, exercise, glucose, insulin, minimal model},
	pages = {338--347},
	file = {Full Text:/Users/lea.multerer/Zotero/storage/GSU996QA/Roy and Parker - 2007 - Dynamic modeling of exercise effects on plasma glu.pdf:application/pdf},
}

@article{palumbo_personalizing_2018,
	title = {Personalizing physical exercise in a computational model of fuel homeostasis},
	volume = {14},
	issn = {1553-7358},
	url = {https://dx.plos.org/10.1371/journal.pcbi.1006073},
	doi = {10.1371/journal.pcbi.1006073},
	number = {4},
	journal = {PLOS Computational Biology},
	author = {Palumbo, MC and Morettini, M and Tieri, P and Diele, F and Sacchetti, M and Castiglione, F},
	year = {2018},
	pages = {e1006073},
	file = {Full Text:/Users/lea.multerer/Zotero/storage/M4EKCX9N/Palumbo et al. - 2018 - Personalizing physical exercise in a computational.pdf:application/pdf},
}

@article{bergman_minimal_2005,
	title = {Minimal model: perspective from 2005},
	volume = {64},
	issn = {1663-2818, 1663-2826},
	shorttitle = {Minimal model},
	url = {https://www.karger.com/Article/FullText/89312},
	doi = {10.1159/000089312},
	journal = {Hormone Research in Paediatrics},
	author = {Bergman, RN},
	year = {2005},
	pages = {8--15},
	file = {Bergman - 2005 - Minimal Model Perspective from 2005.pdf:/Users/lea.multerer/Zotero/storage/R8GGLAR4/Bergman - 2005 - Minimal Model Perspective from 2005.pdf:application/pdf},
}

@article{bergman_physiologic_1981,
	title = {Physiologic evaluation of factors controlling glucose tolerance in man: measurement of insulin sensitivity and beta-cell glucose sensitivity from the response to intravenous glucose.},
	volume = {68},
	issn = {0021-9738},
	shorttitle = {Physiologic evaluation of factors controlling glucose tolerance in man},
	doi = {10.1172/JCI110398},
	number = {6},
	journal = {Journal of Clinical Investigation},
	author = {Bergman, RN and Phillips, LS and Cobelli, C},
	year = {1981},
	pages = {1456--1467},
	file = {Full Text:/Users/lea.multerer/Zotero/storage/VLDVM7F9/Bergman et al. - 1981 - Physiologic evaluation of factors controlling gluc.pdf:application/pdf},
}

@article{morettini_system_2017,
	title = {A system model of the effects of exercise on plasma interleukin-6 dynamics in healthy individuals: role of skeletal muscle and adipose tissue},
	volume = {12},
	issn = {1932-6203},
	shorttitle = {A system model of the effects of exercise on plasma interleukin-6 dynamics in healthy individuals},
	url = {https://dx.plos.org/10.1371/journal.pone.0181224},
	doi = {10.1371/journal.pone.0181224},
	number = {7},
	journal = {PLOS ONE},
	author = {Morettini, M and Palumbo, MC and Sacchetti, M and Castiglione, F and Mazzà, C},
	editor = {Hribal, ML},
	year = {2017},
	pages = {e0181224},
	file = {Full Text:/Users/lea.multerer/Zotero/storage/9LFX2C9L/Morettini et al. - 2017 - A system model of the effects of exercise on plasm.pdf:application/pdf},
}

@article{palumbo_computational_2023,
	title = {A computational model of the effects of macronutrients absorption and physical exercise on hormonal regulation and metabolic homeostasis},
	volume = {163},
	issn = {00104825},
	url = {https://linkinghub.elsevier.com/retrieve/pii/S0010482523006236},
	doi = {10.1016/j.compbiomed.2023.107158},
	journal = {Computers in Biology and Medicine},
	author = {Palumbo, MC and De Graaf, AA and Morettini, M and Tieri, P and Krishnan, S and Castiglione, F},
	year = {2023},
	pages = {107158},
	file = {Palumbo et al. - 2023 - A computational model of the effects of macronutri.pdf:/Users/lea.multerer/Zotero/storage/UJ2F3TNN/Palumbo et al. - 2023 - A computational model of the effects of macronutri.pdf:application/pdf},
}

@book{sanders_averaging_2007,
	address = {New York},
	series = {Applied {Mathematical} {Sciences}},
	title = {Averaging methods in nonlinear dynamical systems},
	volume = {59},
	copyright = {http://www.springer.com/tdm},
	isbn = {978-0-387-48916-2},
	url = {http://link.springer.com/10.1007/978-0-387-48918-6},
	publisher = {Springer New York},
	author = {Sanders, JA and Verhulst, F and Murdock, J},
	year = {2007},
	doi = {10.1007/978-0-387-48918-6},
	file = {book.pdf:/Users/lea.multerer/Zotero/storage/6SJMULH9/book.pdf:application/pdf},
}

@article{rooney_global_2023,
	title = {Global prevalence of prediabetes},
	volume = {46},
	issn = {0149-5992, 1935-5548},
	url = {https://diabetesjournals.org/care/article/46/7/1388/148937/Global-Prevalence-of-Prediabetes},
	doi = {10.2337/dc22-2376},
	number = {7},
	journal = {Diabetes Care},
	author = {Rooney, MR and Fang, M and Ogurtsova, K and Ozkan, B and Echouffo-Tcheugui, Justin B. and Boyko, EJ and Magliano, DJ and Selvin, E},
	year = {2023},
	pages = {1388--1394},
	file = {Full Text:/Users/lea.multerer/Zotero/storage/XLYJ2FNL/Rooney et al. - 2023 - Global Prevalence of Prediabetes.pdf:application/pdf},
}

@book{langtangen_scaling_2016,
	address = {Cham},
	title = {Scaling of differential equations},
	copyright = {http://creativecommons.org/licenses/by-nc/2.5},
	publisher = {Springer International Publishing},
	author = {Langtangen, HP and Pedersen, GK},
	year = {2016},
	doi = {10.1007/978-3-319-32726-6},
	file = {Full Text:/Users/lea.multerer/Zotero/storage/N3NBRUAS/Langtangen and Pedersen - 2016 - Scaling of Differential Equations.pdf:application/pdf},
}

@book{pavliotis_multiscale_2008,
	address = {New York},
	series = {Texts in {Applied} {Mathematics}},
	title = {Multiscale methods: averaging and homogenization},
	shorttitle = {Multiscale methods},
	number = {53},
	publisher = {Springer},
	author = {Pavliotis, GA and Stuart, AM},
	year = {2008},
	file = {multiscale.pdf:/Users/lea.multerer/Zotero/storage/CA44B575/multiscale.pdf:application/pdf;Table of Contents PDF:/Users/lea.multerer/Zotero/storage/R4WJV3JY/Pavliotis et al. - 2008 - Multiscale methods averaging and homogenization.pdf:application/pdf},
}

@book{bakhvalov_homogenisation_1989,
	address = {Dordrecht},
	series = {Mathematics and its {Applications}},
	title = {Homogenisation: averaging processes in periodic media},
	volume = {36},
	copyright = {http://www.springer.com/tdm},
	shorttitle = {Homogenisation},
	url = {http://link.springer.com/10.1007/978-94-009-2247-1},
	publisher = {Springer Netherlands},
	author = {Bakhvalov, N and Panasenko, G},
	year = {1989},
	file = {Full Text:/Users/lea.multerer/Zotero/storage/RYJZLLG4/Bakhvalov and Panasenko - 1989 - Homogenisation Averaging Processes in Periodic Me.pdf:application/pdf},
}

@book{cioranescu_introduction_1999,
	address = {Oxford},
	series = {Oxford lecture series in mathematics and its applications},
	title = {An introduction to homogenization},
	isbn = {978-0-19-856554-3},
	number = {17},
	publisher = {Oxford University Press},
	author = {Cioranescu, D and Donato, P},
	year = {1999},
	keywords = {Mathematical models, Homogenization (Differential equations), Materials},
}

@article{allaire_brief_2012,
	title = {A brief introduction to homogenization and miscellaneous applications},
	volume = {37},
	issn = {1270-900X},
	url = {http://www.esaim-proc.org/10.1051/proc/201237001},
	doi = {10.1051/proc/201237001},
	journal = {ESAIM: Proceedings},
	author = {Allaire, G},
	year = {2012},
	pages = {1--49},
	file = {Full Text:/Users/lea.multerer/Zotero/storage/MM2NVZ4W/Allaire - 2012 - A brief introduction to homogenization and miscell.pdf:application/pdf},
}

@article{de_gaetano_novel_2019,
	title = {A novel fast-slow model of diabetes progression: insights into mechanisms of response to the interventions in the diabetes prevention program},
	volume = {14},
	issn = {1932-6203},
	shorttitle = {A novel fast-slow model of diabetes progression},
	url = {https://dx.plos.org/10.1371/journal.pone.0222833},
	doi = {10.1371/journal.pone.0222833},
	number = {10},
	journal = {PLOS ONE},
	author = {De Gaetano, A and Hardy, TA},
	editor = {Song, Y},
	year = {2019},
	pages = {e0222833},
	file = {Full Text:/Users/lea.multerer/Zotero/storage/LJVHJATQ/De Gaetano and Hardy - 2019 - A novel fast-slow model of diabetes progression I.pdf:application/pdf},
}

@article{de_gaetano_mathematical_2008,
	title = {Mathematical models of diabetes progression},
	volume = {295},
	issn = {0193-1849, 1522-1555},
	url = {https://www.physiology.org/doi/10.1152/ajpendo.90444.2008},
	doi = {10.1152/ajpendo.90444.2008},
	number = {6},
	journal = {American Journal of Physiology-Endocrinology and Metabolism},
	author = {De Gaetano, A and Hardy, TA and Beck, B and Abu-Raddad, E and Palumbo, P and Bue-Valleskey, J and Pørksen, N},
	year = {2008},
	pages = {E1462--E1479},
}

@article{bull_world_2020,
	title = {World {Health} {Organization} 2020 guidelines on physical activity and sedentary behaviour},
	volume = {54},
	issn = {0306-3674, 1473-0480},
	url = {https://bjsm.bmj.com/lookup/doi/10.1136/bjsports-2020-102955},
	doi = {10.1136/bjsports-2020-102955},
	number = {24},
	journal = {British Journal of Sports Medicine},
	author = {Bull, FC and Al-Ansari, S and Biddle, S and Borodulin, K and Buman, Matthew P and Cardon, G and Carty, Catherine and Chaput, JP and Chastin, Sebastien and Chou, R and Dempsey, PC and DiPietro, L and Ekelund, U and Firth, J and Friedenreich, CM and Garcia, L and Gichu, M and Jago, R and Katzmarzyk, PT and Lambert, E and Leitzmann, M and Milton, K and Ortega, Francisco B and Ranasinghe, C and Stamatakis, E and Tiedemann, A and Troiano, RP and Van Der Ploeg, HP and Wari, V and Willumsen, JF},
	year = {2020},
	pages = {1451--1462},
	file = {Full Text:/Users/lea.multerer/Zotero/storage/VPQ2PPM5/Bull et al. - 2020 - World Health Organization 2020 guidelines on physi.pdf:application/pdf},
}

@book{torquato_random_2002,
	address = {New York},
	series = {Interdisciplinary {Applied} {Mathematics}},
	title = {Random heterogeneous materials},
	volume = {16},
	copyright = {http://www.springer.com/tdm},
	isbn = {978-1-4757-6355-3},
	publisher = {Springer New York},
	author = {Torquato, S},
	year = {2002},
	doi = {10.1007/978-1-4757-6355-3},
}

@book{hornung_homogenization_1997,
	address = {New York},
	series = {Interdisciplinary {Applied} {Mathematics}},
	title = {Homogenization and porous media},
	volume = {6},
	copyright = {http://www.springer.com/tdm},
	isbn = {978-1-4612-1920-0},
	publisher = {Springer New York},
	editor = {Hornung, U},
	year = {1997},
	doi = {10.1007/978-1-4612-1920-0},
}

@book{verhulst_methods_2005,
	address = {New York},
	series = {Texts in {Applied} {Mathematics}},
	title = {Methods and applications of singular perturbations},
	volume = {50},
	copyright = {http://www.springer.com/tdm},
	isbn = {978-0-387-28313-5},
	publisher = {Springer New York},
	author = {Verhulst, F},
	year = {2005},
	doi = {10.1007/0-387-28313-7},
}

@inproceedings{multerer_computationally_2024,
	address = {Lisbon},
	title = {A computationally efficient deep learning-based surrogate model of prediabetes progression},
	copyright = {https://doi.org/10.15223/policy-029},
	isbn = {979-8-3503-8622-6},
	url = {https://ieeexplore.ieee.org/document/10822298/},
	doi = {10.1109/BIBM62325.2024.10822298},
	booktitle = {2024 {IEEE} {International} {Conference} on {Bioinformatics} and {Biomedicine} ({BIBM})},
	publisher = {IEEE},
	author = {Multerer, L and Toniolo, S and Mitrović, S and Palumbo, MC and Ravoni, A and Tieri, P and Forgione, M and Azzimonti, L},
	year = {2024},
	pages = {6951--6958},
	file = {A_Computationally_Efficient_Deep_Learning-Based_Surrogate_Model_of_Prediabetes_Progression-2.pdf:/Users/lea.multerer/Zotero/storage/T2FGUG9B/A_Computationally_Efficient_Deep_Learning-Based_Surrogate_Model_of_Prediabetes_Progression-2.pdf:application/pdf},
}

@article{topp_model_2000,
	title = {A model of beta-cell mass, insulin, and glucose kinetics: pathways to diabetes},
	volume = {206},
	copyright = {https://www.elsevier.com/tdm/userlicense/1.0/},
	issn = {00225193},
	shorttitle = {A model of beta-cell mass, insulin, and glucose kinetics},
	url = {https://linkinghub.elsevier.com/retrieve/pii/S0022519300921507},
	doi = {10.1006/jtbi.2000.2150},
	number = {4},
	journal = {Journal of Theoretical Biology},
	author = {Topp, B and Promislow, K and Devries, G and Miura, RM and Finegood, DT},
	year = {2000},
	pages = {605--619},
	file = {Topp et al. - 2000 - A Model of β -Cell Mass, Insulin, and Glucose Kine.pdf:/Users/lea.multerer/Zotero/storage/36ZPYBNW/Topp et al. - 2000 - A Model of β -Cell Mass, Insulin, and Glucose Kine.pdf:application/pdf},
}

@inproceedings{lenatti_counterfactual_2025,
	address = {Cham},
	title = {Counterfactual inference using ordinary differential equations to assess the effect of physical activity on type 2 diabetes onset},
	volume = {15734},
	isbn = {978-3-031-95837-3},
	url = {https://link.springer.com/10.1007/978-3-031-95838-0_21},
	doi = {10.1007/978-3-031-95838-0_21},
	booktitle = {Artificial {Intelligence} in {Medicine}},
	publisher = {Springer Nature Switzerland},
	author = {Lenatti, M and Zaffalon, M and Antonucci, A and De Paola, PF and Multerer, L and Mongelli, M and Paglialonga, A and Azzimonti, L},
	year = {2025},
	pages = {212--221},
	file = {Lenatti et al. - 2025 - Counterfactual Inference Using Ordinary Differenti.pdf:/Users/lea.multerer/Zotero/storage/6ZWYSLLW/Lenatti et al. - 2025 - Counterfactual Inference Using Ordinary Differenti.pdf:application/pdf},
}

@article{diabetes_prevention_program_research_group_reduction_2002,
	title = {Reduction in the incidence of type 2 diabetes with lifestyle intervention or metformin},
	volume = {346},
	issn = {0028-4793, 1533-4406},
	url = {http://www.nejm.org/doi/abs/10.1056/NEJMoa012512},
	doi = {10.1056/NEJMoa012512},
	number = {6},
	journal = {New England Journal of Medicine},
	author = {{Diabetes Prevention Program Research Group}},
	year = {2002},
	pages = {393--403},
	file = {Full Text:/Users/lea.multerer/Zotero/storage/U9STQAJC/2002 - Reduction in the Incidence of Type 2 Diabetes with.pdf:application/pdf},
}

@article{li_modeling_2006,
	title = {Modeling the glucose–insulin regulatory system and ultradian insulin secretory oscillations with two explicit time delays},
	volume = {242},
	copyright = {https://www.elsevier.com/tdm/userlicense/1.0/},
	issn = {00225193},
	doi = {10.1016/j.jtbi.2006.04.002},
	number = {3},
	journal = {Journal of Theoretical Biology},
	author = {Li, J and Kuang, Y and Mason, CC},
	year = {2006},
	pages = {722--735},
}

@article{shabestari_new_2018,
	title = {A new chaotic model for glucose-insulin regulatory system},
	volume = {112},
	issn = {09600779},
	doi = {10.1016/j.chaos.2018.04.029},
	journal = {Chaos, Solitons \& Fractals},
	author = {Shabestari, PS and Panahi, S and Hatef, B and Jafari, S and Sprott, JC},
	year = {2018},
	pages = {44--51},
}

@article{de_paola_modeling_2025,
	title = {Modeling the cumulative benefits of regular physical activity on type 2 diabetes progression},
	volume = {198},
	issn = {00104825},
	url = {https://linkinghub.elsevier.com/retrieve/pii/S0010482525015471},
	doi = {10.1016/j.compbiomed.2025.111194},
	journal = {Computers in Biology and Medicine},
	author = {De Paola, PF and Borri, A and Dabbene, F and Keshavjee, K and Palumbo, P and Paglialonga, A},
	year = {2025},
	pages = {111194},
}
%%%%%%%%%%%%%%%%%%%%%%%%%%%%%%%%%%%%%%%%%%%%%%%%%%%%%%%%%%%%%%%%%%%%%%%%%%%%%%%%
\end{document}